\numberwithin{equation}{section}
\title[Partial Distance Correlation]{Partial Distance Correlation with
Methods for Dissimilarities}
\author{G\'abor J. Sz\'ekely}
\address{G\'abor J. Sz\'ekely\\
National Science Foundation\\
4201 Wilson Blvd. \#1025\\
  Arlington, VA 22230}
  \email{gszekely@nsf.gov}
\thanks{G\'abor J. Sz\'ekely is Program Director in Statistics at the National
Science Foundation, 4201 Wilson Blvd. \#1025, Arlington, VA 22230,
email: gszekely@nsf.gov.}
\thanks{Maria L. Rizzo is Associate Professor
in the Dept. of Mathematics and Statistics at
Bowling Green State University, Bowling Green, OH 43403,
email: mrizzo@bgsu.edu.}
\thanks{
{Research of the first author was supported by the National Science Foundation,
while working at the Foundation.}}
\author{Maria L. Rizzo}
\address{Maria L. Rizzo\\
Department of Mathematics and Statistics\\
Bowling Green State University\\
Bowling Green, OH 43403}
\email{mrizzo@bgsu.edu}
\urladdr{personal.bgsu.edu/~mrizzo}
    \numberwithin{equation}{section}
    \theoremstyle{plain}
    \newtheorem{thm}{Theorem}
    \newtheorem{Lemma}{Lemma}
    \newtheorem{Definition}{Definition}
    \newtheorem{Proposition}{Proposition}
    \theoremstyle{remark}
    \newtheorem{remark}{remark}%
    \newtheorem{Example}{Example}%
    \newcommand{\la}{\langle}
    \newcommand{\ra}{\rangle}
    \newcommand{\fx}{\phi_{{}_{X}}}
    \newcommand{\fy}{\phi_{{}_{Y}}}
    \newcommand{\fxy}{\phi_{{}_{X,Y}}}
    \def\Cov{\mathop{\rm Cov}\nolimits}%
    \def\dCov{\mathop{\rm dCov}\nolimits}%
    \def\pdCov{\mathop{\rm pdCov}\nolimits}%
    \def\pdCor{\mathop{\rm pdCor}\nolimits}%
\date {\today}
\begin{document}

\maketitle

\begin{abstract}
Distance covariance and distance correlation are scalar coefficients that characterize independence of random
vectors in arbitrary dimension. Properties, extensions, and applications of distance correlation have been
discussed in the recent literature, but the problem of defining the partial distance correlation has remained an
open question of considerable interest. The problem of partial distance correlation is more complex than partial
correlation partly because the squared distance covariance is not an inner product in the usual linear space. For
the definition of partial distance correlation we introduce a new Hilbert space where the squared distance
covariance is the inner product.  We define the partial distance correlation statistics with the help of this
Hilbert space, and develop and implement a test for zero partial distance correlation. Our intermediate results
provide an unbiased estimator of squared distance covariance, and a neat solution to the problem of distance
correlation for dissimilarities rather than distances.
\end{abstract}

\subjclass[2000]
{Primary: 62H20, 62H15; Secondary: 62Gxx}

\keywords{independence, multivariate, partial distance correlation, dissimilarity, energy statistics}

\section{Introduction}\label{S1}

Distance covariance (dCov) and distance correlation characterize multivariate independence for random vectors in
arbitrary, not necessarily equal dimension. In this work we focus on the open problem of \emph{partial distance
correlation}. Our intermediate results include methods for applying distance correlation to dissimilarity
matrices.

The distance covariance, denoted $\mathcal V(X, Y)$, of two random vectors $X$ and $Y$ characterizes
independence; that is, $$
 \mathcal V(X, Y) \geq 0
$$ with equality to zero if and only if $X$ and $Y$ are independent. This coefficient is defined by a weighted $L_2$ norm measuring the distance between
the joint characteristic function (c.f.) $\fxy$ of $X$ and $Y$, and the
product $\fx \fy$ of the marginal c.f.'s of $X$ and $Y$.
If $X$ and $Y$ take values in $\mathbb R^p$ and $\mathbb R^q$, respectively,
$\mathcal V(X, Y)$ is the non-negative square root of
 \begin{align}\label{e:Aw}
\mathcal{V}^2(X,Y)&= \| \fxy(t,s)-\fx(t)\fy(s) \|_w^2 \\&:=
\int_{{R}^{p + q}}|\fxy (t,s) - \fx(t)\fy(s)|^2 w (t, s) \,
dt \, ds, \notag
\end{align}
where
$w(t, s) := (|t|_p^{1+p} |s|_q^{1+ q})^{-1}$.
The integral exists
provided that $X$ and $Y$ have finite first moments.
Note that Feuerverger (1993) \cite{feuer1993} proposed a bivariate test based on this idea and applied the same weight function $w$, where it may have first appeared.

The following identity is established in Sz\'ekely and Rizzo \cite[Theorem 8, p. 1250]{sr09a}. Let $(X, Y)$, $(X', Y')$, and $(X'', Y'')$ be independent and identically distributed (iid), each with joint distribution $(X, Y)$. Then
\begin{align} \label{Edcov}
\mathcal V^2(X, Y) &= E|X-X'||Y-Y'| + E|X-X'| \cdot E|Y-Y'| \\
& \qquad - E|X-X'||Y-Y''| - E|X-X''||Y-Y'|, \notag
\end{align}
provided that $X$ and $Y$ have finite first moments.
In Section \ref{S4} an alternate version of (\ref{Edcov}) is defined for $X$ and $Y$ taking values in a separable Hilbert space. That definition and intermediate results lead to the definition of partial distance covariance.

We summarize a few key properties and computing formulas below for easy reference. The distance correlation
(dCor) $\mathcal R(X, Y)$ is a standardized coefficient, $0 \leq \mathcal R(X, Y) \leq 1$,
 that also characterizes independence:
$$
 \mathcal R(X, Y) =
\left\{
  \begin{array}{ll}
  \frac{\mathcal V(X, Y)}{\sqrt{\mathcal V(X,X) \mathcal V(Y,Y)}}, &
  \hbox{$\mathcal V(X,X) \mathcal V(Y,Y) > 0$;} \\
  0, & \hbox{$\mathcal V(X,X) \mathcal V(Y,Y) = 0$.}
  \end{array}
\right.
$$
For more details see \cite{srb07} and \cite{sr09a}. Properties, extensions,
and applications of distance correlation have been discussed in the recent literature; see e.g. Dueck et al.\ \cite{degr2012}, Lyons
\cite{lyons2013}, Kong et al.\ \cite{kkklw2012}, and Li, Zhong, and Zhu \cite{lzz2012}. However, there is
considerable interest among researchers and statisticians on the open problem of a suitable definition and
supporting theory of partial distance  correlation. Among the many potential application areas of partial
distance correlation are variable selection (see Example \ref{ex7})
and graphical models; see e.g.\ Wermuth and Cox \cite{wc13} for an example of work that motivated the question in that context.

In this work we introduce the definition of partial distance covariance
(pdCov) and partial distance correlation
(pdCor) statistics and population coefficients. First, let us see why it is
not straightforward to generalize
distance correlation to partial distance correlation in a meaningful way
that preserves the essential properties
one would require, and allows for interpretation and inference.

One could try
to follow the definitions of the
classical partial covariance and partial correlation that are based on
orthogonal projections in a Euclidean
space, but there is a serious difficulty. Orthogonality in case of partial
distance covariance and partial
distance correlation means independence, but when we compute the orthogonal
projection of a random variable onto
the condition variable, the ``remainder'' in the difference is typically not
independent of the condition.

Alternately, the form of sample distance covariance (Definition \ref{defVn})
may suggest an inner product, so one might think of
working in the Hilbert space of double centered distance matrices
(\ref{Dcenter}), where the inner
product is the squared distance covariance statistic (\ref{e:AnXY}).
Here we are facing another problem: what would the projections
represent? The difference $D$ of double centered distance matrices
is typically not a double centered distance
matrix of any sample. This does not affect formal computations,
but if we cannot interpret our formulas
in terms of samples then inference becomes impossible.

To overcome these
difficulties while preserving the essential properties of distance
covariance, we finally arrived at an elegant
solution which starts with defining an alternate type of double centering
called ``$\mathcal U$-centering'' (see
Definition \ref{def:tildeA} and Proposition \ref{P:unbiased} below). The corresponding inner product is an
unbiased estimator of squared population distance covariance. In the Hilbert
space of ``$\mathcal U$-centered'' matrices,
all linear combinations, and in particular projections, are zero diagonal $\mathcal U$-centered matrices. We prove a representation theorem that connects
the orthogonal projections to random samples in Euclidean space.
Methods for inference are
outlined and implemented, including methods for non-Euclidean
dissimilarities.

Definitions and background for dCov and dCor are summarized in Section 2.
The partial distance correlation
statistic is introduced in Section 3, and Section 4 covers the population
partial distance covariance and
inference for a test of the hypothesis of zero partial distance
correlation. Examples and Applications are given
in Section 5, followed by a Summary in Section 6. Appendix A contains
proofs of statements.

\section{Background}\label{S2}

In this section we summarize the definitions for distance covariance and
distance correlation statistics.

\subsection{Notation}
Random vectors are denoted with upper case letters and their sample
realizations with lowercase letters. In
certain contexts we need to work with a data matrix, which is denoted in
boldface, such as $\mathbf U$ or
$\mathbf V$. For example, the sample distance correlation of a realization
from the joint distribution of $(X,
Y)$ is denoted $\mathcal R_n(\mathbf X, \mathbf Y)$. The norm $|\cdot|_d$
denotes the Euclidean norm when its
argument is in $\mathbb R^d$, and we omit the subscript if the meaning
is clear in context. A primed random
variable denotes an independent and identically distributed (iid) copy of the unprimed symbol; so $X,X'$ are iid,
etc. The pair $x, x'$ denotes two realizations of the random variable $X$. The transpose of $x$ is denoted by
$x^T$.
 Other notation will be
introduced within the section where it is needed.

In this paper a \emph{random sample} refers to independent and identically distributed (iid) realizations from
the underlying joint or marginal distribution.

\subsection{Sample dCov and dCor}

The distance covariance and distance correlation statistics are functions of the double centered distance
matrices of the samples. For an observed random sample $\{(x_{i}, y_{i}):i=1,\dots,n\} $ from the joint
distribution of random vectors $X $ and $Y $, compute the Euclidean distance matrices $(a_{ij})=(|x_i-x_j|_p)$
and $(b_{ij})=(|y_i-y_j|_q)$.  Define
\begin{align}\label{Dcenter}
\widehat A_{ij}= a_{ij}-\bar a_{i.}- \bar a_{.\,j} + \bar a_{..}\,,
\qquad i,j=1,\dots,n,
\end{align}
where
\begin{align*}
\bar a_{i .}= \frac{1}{n}
\sum_{j=1}^n a_{ij}, \quad \bar a_{. j}, = \frac{1}{n}
\sum_{i=1}^n
a_{ij}, \quad
\bar a_{..} & = \frac{1}{n^2} \sum_{i,j=1}^n a_{ij}.
\end{align*}
Similarly, define $\widehat B_{ij}= b_{ij} -\bar b_{i .}- \bar b_{. \,j} +  \bar b_{..}$, for $i,j=1,\dots,n$.

\begin{remark}
In previous work, we denoted the double centered distance matrix of a sample by a plain capital letter, so that
$\widehat A$ and $\widehat B$ above were denoted by $A$ and $B$. In this paper we work with two different methods
of centering, so we have introduced notation to clearly distinguish the two methods. In the first centering above
the centered versions $\widehat A_{ij}$ have the property that all rows and columns sum to zero. Another type of
centering, which we will call unbiased or $\mathcal U$-centering and denote by $\widetilde A_{ij}$ in Definition
\ref{def:tildeA} below, has the additional property that all expectations are zero: that is, $E[\widetilde
A_{ij}] = 0$ for all $i,j$. Although this centering has a somewhat more complicated formula (\ref{AU}), we will
see in this paper the advantages of the new centering.
\end{remark}

\begin{Definition}\label{defVn}
The sample distance covariance $\mathcal{V}_n(\mathbf X, \mathbf Y)$ and sample distance correlation $\mathcal
R_n(\mathbf X, \mathbf Y)$ are defined by
\begin{equation}\label{e:AnXY}
 \mathcal{V}^2_n (\mathbf X, \mathbf Y) = {\frac{1}{n^2} \sum_{i,j=1}^n  \widehat A_{ij} \widehat B_{ij}} \;.
\end{equation}
and
\begin{equation}\label{e:alphan}
  \mathcal{R}^2_n({\mathbf X, \mathbf Y})=
 \left\{
   \begin{array}{ll}
 \frac {\mathcal{V}^2_n({\mathbf X, \mathbf Y})}{\sqrt {\vphantom{\widehat{A}}
 \mathcal{V}^2_n (\mathbf X) \mathcal{V}^2_n(\mathbf Y)}} \;,
   & \hbox{$ \mathcal{V}^2_n(\mathbf X)  \mathcal{V}^2_n (\mathbf Y) >0
$;} \\
     0, & \hbox{$\mathcal{V}^2_n (\mathbf X)  \mathcal{V}^2_n (\mathbf Y) =0$.}
   \end{array}
 \right.
 \end{equation}
respectively, where the squared sample distance variance is defined by
\begin{equation}\label{e:AnX}
 \mathcal{V}^2_n (\mathbf X) = \mathcal{V}^2_n ({\mathbf X, \mathbf X}) = {\frac{1}{n^2} \sum_{i,j=1}^n
 \widehat A_{ij}^2} \;.
 \end{equation}
\end{Definition}

\begin{remark}
Although it is obvious, it is perhaps worth noting that $\mathcal V_n(\mathbf X, \mathbf Y)$ and $\mathcal
R_n(\mathbf X, \mathbf Y)$ are rigid motion invariant, and they can be defined entirely in terms of the distance
matrices, or equivalently in terms of the double centered distance matrices. This fact is important for what
follows, because as we will see, one can compute partial distance covariance by operating on certain
transformations of the distance matrices, without reference to the original data that generated the matrices. (These definitions are not invariant to monotone
transformation of coordinates, such as ranks.)
\end{remark}

If $X$ and $Y$ have finite first moments, the population distance covariance coefficient $\mathcal V^2(X, Y)$
exists and equals zero if and only if the random vectors $X$ and $Y$
are independent.
Some of the properties of distance covariance and distance correlation include:
\begin{enumerate}
 \item $\mathcal V_n(\mathbf X, \mathbf Y)$ and $\mathcal R_n(\mathbf X, \mathbf Y)$ converge almost surely
to $\mathcal V(X, Y)$ and $\mathcal R(X, Y)$, as $n \to \infty$.
 \item $\mathcal V_n(\mathbf X, \mathbf Y) \geq 0$ and $\mathcal V_n(\mathbf X)=0$ if and only if every
     sample observation is identical.
\item \label{Th4-1}
 $0 \leq \mathcal{R}_n(\mathbf X, \mathbf Y) \leq 1$.
\item \label{Th4-2} If $\mathcal{R}_n(\mathbf X, \mathbf Y) = 1$ then
 there exists a vector $a$, a non-zero real number b and an orthogonal matrix
 $R$ such that $\mathbf Y = a + b\mathbf XR$, for the data matrices $\mathbf X$ and $\mathbf Y$.
\end{enumerate}

A consistent test of multivariate independence is based on the sample distance covariance: large values of $n
\mathcal V^2_n(\mathbf X, \mathbf Y)$ support the alternative hypothesis that
$X$ and $Y$ are dependent (see
\cite{srb07,sr09a}). A high-dimensional distance correlation
 $t$-test of independence was introduced by Sz\'ekely and Rizzo \cite{sr2013a};
the tests are implemented as \texttt{dcov.test} and \texttt{dcor.ttest}, respectively, in the \emph{energy} package
(Rizzo and Sz\'ekely, \cite{energy}) for R (R Core Team \cite{R}).

The definitions of sample distance covariance and sample distance correlation can be extended to random samples
taking values in any, possibly different, metric spaces. For defining the population coefficient we need to
suppose more (see Lyons \cite{lyons2013}).  If $X$ and $Y$ take values in possibly different separable Hilbert
spaces and both $X$ and $Y$ have finite expectations then it remains true that  $\mathcal V^2(X, Y) \ge 0 $, and
equals zero if and only if $X$ and $Y$ are independent. This implies that the results of this paper can be
extended to separable Hilbert space valued variables. Extensions and population coefficients pdCov and pdCor are
discussed in Section \ref{S4}. Extending Lyons \cite{lyons2013},
Sejdinovic et al. \cite{ssgf2013} discuss the equivalence of distance
based and RKHS-based statistics for testing dependence.

For theory, background, and further properties of the population and sample coefficients, see Sz\'ekely, et al. \cite{srb07}, Sz\'ekely and Rizzo \cite{sr09a}, and Lyons \cite{lyons2013}, and the references therein.
On the weight function applied see also
Feuerverger \cite{feuer1993} and Sz\'ekely and Rizzo \cite{sr2012}.
For an overview of recent methods for measuring
dependence and testing independence of random vectors, including distance covariance, readers are referred to
Josse and Holmes \cite{jh2013}.

\section{Partial distance correlation}\label{S3}

We introduce partial distance correlation by first considering the sample coefficient.

\subsection{The Hilbert space of centered distance matrices}

\begin{Definition}\label{def:tildeA}
Let $A=(a_{ij})$ be a symmetric, real valued $n \times n$ matrix with zero diagonal, $n > 2$. Define the
$\mathcal U$-centered matrix $\widetilde A$ as follows. Let the $(i, j)$-th entry of $\widetilde A$ be defined by
\begin{equation} \label{AU}
\widetilde A_{i,j}=
\left\{
     \begin{array}{ll}
       a_{i,j} - \frac 1{n-2} \sum\limits_{\ell=1}^n a_{i, \ell} -
\frac{1}{n-2} \sum\limits_{k=1}^n a_{k, j} + \frac{1}{(n-1)(n-2)} \sum\limits_{k, \ell=1}^n a_{k, \ell},
 & \hbox{$i \neq j$;} \\
       0, & \hbox{$i=j$.}
     \end{array}
   \right.
\end{equation}
\end{Definition}
Here ``$\mathcal U$-centered'' is so named because as shown below, the corresponding inner product (\ref{VU}) defines an
unbiased estimator of squared distance covariance.

\begin{Proposition}\label{P:unbiased}
Let $(x_i,y_i), \, i=1,\dots,n$ denote a sample of observations from the
joint distribution $(X,Y)$ of random vectors $X$ and $Y$.
Let $A=(a_{ij})$ be the Euclidean distance matrix of the sample $x_1,\dots,$ $x_n$ from the distribution of $X$,
and $B=(b_{ij})$ be the Euclidean distance matrix of the sample $y_1,\dots, y_n$ from the distribution of $Y$.
Then if $E(|X|+|Y|)<\infty$, for $n > 3$,
\begin{equation} \label{VUdcov}
(\widetilde A \cdot \widetilde B) :=  \frac{1}{n(n-3)} \sum_{i \neq j} \widetilde A_{i,j}\widetilde B_{i,j}
\end{equation}
is an unbiased estimator of squared population distance covariance $\mathcal V^2(X, Y)$.
\end{Proposition}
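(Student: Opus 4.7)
My plan is to reduce the statement to identity (\ref{Edcov}). Writing $T_1 = E|X-X'||Y-Y'|$, $T_2 = E|X-X'|\cdot E|Y-Y'|$, $T_3 = E|X-X'||Y-Y''|$ and $T_4 = E|X-X''||Y-Y'|$, we need to verify
\[
\tfrac{1}{n(n-3)}\,E\!\sum_{i\neq j}\widetilde A_{ij}\widetilde B_{ij} \;=\; T_1 + T_2 - T_3 - T_4.
\]
Since the entries $(x_i,y_i)$ are iid, $E[\widetilde A_{ij}\widetilde B_{ij}]$ depends only on whether $i=j$. So the problem reduces to evaluating a single expectation, say $E[\widetilde A_{12}\widetilde B_{12}]$, and checking the identity $(n-1)\,E[\widetilde A_{12}\widetilde B_{12}] = (n-3)\,\mathcal V^2(X,Y)$.

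The natural first substep is to verify that $E[\widetilde A_{ij}]=0$ for $i\neq j$; this is a short direct check using $E[a_{ij}]=\mu_X$ for $i\neq j$ and $a_{ii}=0$, and the coefficients $1/(n-2)$ and $1/((n-1)(n-2))$ in (\ref{AU}) are precisely calibrated so that the four pieces cancel. With this in hand, $E[\widetilde A_{12}\widetilde B_{12}]=\mathrm{Cov}(\widetilde A_{12},\widetilde B_{12})$, so any terms in the expansion that factor through $E\widetilde A_{12}$ or $E\widetilde B_{12}$ drop out. This is where the design of $\mathcal U$-centering pays off: the nuisance terms that would otherwise bias a naive double-centered estimator simply disappear.

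The core step is then to expand $\widetilde A_{12}\widetilde B_{12}$ by (\ref{AU}) into sixteen terms of the form (constant)$\cdot a_{\alpha\beta}b_{\gamma\delta}$ and classify each resulting expectation by the pattern of overlap among the indices $\{\alpha,\beta,\gamma,\delta\}\subseteq\{1,\dots,n\}$ (remembering $1,2$ are distinguished). Up to $X\leftrightarrow Y$ symmetry, only four patterns survive: $E[a_{12}b_{12}]=T_1$ (both pairs coincide), $E[a_{12}b_{13}]=T_3$ and its mirror $E[a_{13}b_{12}]=T_4$ (exactly one index shared), $E[a_{12}b_{34}]=T_2$ (disjoint pairs, factoring by independence), plus degenerate ``collapsed'' cases like $E[a_{11}b_{12}]$ which vanish because $a_{ii}=0$. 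I then count, for each pattern, the number of index tuples in the sums $\sum_\ell,\sum_k,\sum_{k,\ell}$ that produce it, and combine with the $1/(n-2)$ and $1/((n-1)(n-2))$ weights. The main obstacle, and the only nontrivial part of the proof, is this combinatorial bookkeeping: one has to track, for each of the sixteen cross-products, how many overlap configurations land in each of the four categories, and verify that the coefficients of $T_1,T_2,T_3,T_4$ after simplification are exactly $(n-3)/(n-1)$ times $+1,+1,-1,-1$. Once this accounting is done, comparison with (\ref{Edcov}) yields the claim.
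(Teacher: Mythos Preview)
Your proposal is correct and follows essentially the same route as the paper's proof: expand the product via (\ref{AU}) into sixteen cross-terms, classify each $E[a_{\alpha\beta}b_{\gamma\delta}]$ by the index-overlap pattern (yielding your $T_1,T_2,T_3,T_4$, which are the paper's $\gamma,\alpha\beta,\delta,\delta$), and do the combinatorial count. The only organizational difference is that the paper keeps the full double sum $\sum_{k\neq l}$ and first simplifies it algebraically into three auxiliary sums before taking expectations, whereas you invoke exchangeability at the outset to reduce to the single term $E[\widetilde A_{12}\widetilde B_{12}]$; both lead to the same bookkeeping, and your preliminary observation that $E[\widetilde A_{ij}]=0$ is a nice sanity check but is not actually needed to carry out the count.
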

Proposition \ref{P:unbiased} is proved in Appendix (\ref{prfUnbiased}). It is obvious that $\widetilde A=0$ if all of the sample observations are identical.
More generally, $\widetilde A =0$ if and only if the $n$ sample observations are equally distant or at least $n-1$ of the $n$ sample observations are identical.

For a fixed $n \geq 4$, we define a Hilbert space generated by
Euclidean distance matrices of arbitrary sets (samples) of $n$ points in a
Euclidean space $\mathbb R^p$, $p \geq 1$. Consider the linear span
$\mathscr S_{n}$ of all $n \times n$ distance matrices of samples
$\{x_1,\dots,x_n\}$. Let $A = (a_{ij})$ be an
arbitrary element in $\mathscr S_{n}$. Then $A$ is a real valued,
symmetric matrix with zero diagonal.

Let $\mathcal H_{n}=\{\widetilde A: A \in \mathscr S_{n} \}$ and for each pair
of elements $C=(C_{i,j})$,
$D=(D_{i,j})$ in the linear span of $\mathcal H_{n}$ define their inner product
\begin{equation}\label{VU}
  (C \cdot D) =
  \frac{1}{n(n-3)} \sum_{i \neq j} C_{ij} D_{ij}.
\end{equation}
If $(C \cdot C)=0$ then $(C \cdot D)=0$ for any $D \in \mathcal H_{n}$.

Below in Theorem \ref{thmMDS} it is shown that every matrix $C \in \mathscr H_n$
is the $\mathcal U$-centered distance
matrix of a configuration of $n$ points in a Euclidean space $\mathbb R^p$,
where $p \leq n-2$.

\begin{thm}\label{T:Hilbert}
The linear span of all $n \times n$ matrices $\mathcal H_n= \{\widetilde A: A \in \mathscr S_{n}\}$ is a Hilbert
space with inner product defined by (\ref{VU}).
\end{thm}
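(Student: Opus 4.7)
My plan is to verify the three ingredients of the Hilbert space definition in turn: (a) that the linear span of $\mathcal{H}_n$ is a well-defined real vector space, (b) that the bilinear form (\ref{VU}) satisfies the inner product axioms on it, and (c) that the resulting inner product space is complete. Since $\mathscr{S}_n$ is already a linear span and $\mathcal{U}$-centering $A \mapsto \widetilde{A}$ is visibly a linear operation on $n\times n$ matrices, the image $\mathcal{H}_n$ is automatically a linear subspace, so its linear span equals $\mathcal{H}_n$ itself and sits inside the finite-dimensional real vector space $\mathcal{Z}_n$ of symmetric $n\times n$ matrices with zero diagonal.

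For the inner product axioms, symmetry and bilinearity of $(C \cdot D) = \frac{1}{n(n-3)} \sum_{i\ne j} C_{ij}D_{ij}$ are immediate from the definition, because the sum is symmetric in $C,D$ and linear in each argument. Non-negativity is also immediate since
\begin{equation*}
(C \cdot C) = \frac{1}{n(n-3)} \sum_{i\ne j} C_{ij}^2 \ge 0
\end{equation*}
for any $n \ge 4$. The only non-obvious step is positive-definiteness: if $(C\cdot C)=0$, then $C_{ij}=0$ for all $i\ne j$. Here the key use of the ambient space $\mathcal{Z}_n$ enters: every element of $\mathcal{H}_n$ has zero diagonal by construction of $\mathcal{U}$-centering in Definition \ref{def:tildeA}, so $C_{ij}=0$ for $i\ne j$ together with $C_{ii}=0$ forces $C=0$ as an element of $\mathcal{H}_n$. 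This is where I would emphasize that the zero-diagonal convention is not cosmetic but essential; without it, the form would only be a semi-inner product.

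Completeness is then automatic and free. Since $\mathcal{H}_n \subseteq \mathcal{Z}_n$ and $\dim \mathcal{Z}_n = \binom{n}{2} < \infty$, the subspace $\mathcal{H}_n$ is finite-dimensional. Any finite-dimensional inner product space over $\mathbb{R}$ is isomorphic (via any orthonormal basis one extracts by Gram--Schmidt applied to (\ref{VU})) to a Euclidean space and is therefore complete; equivalently, every Cauchy sequence in $\mathcal{H}_n$ is Cauchy entrywise in the finite matrix entries and hence convergent to a limit which, being a limit of symmetric zero-diagonal elements of the closed subspace $\mathcal{H}_n$, lies in $\mathcal{H}_n$.

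I expect the main conceptual obstacle, rather than a computational one, to be the positive-definiteness step: one has to notice explicitly that diagonal entries are excluded from the sum in (\ref{VU}), and therefore the claim $(C\cdot C)=0 \Rightarrow C=0$ would fail on a generic space of symmetric matrices but succeeds on $\mathcal{H}_n$ precisely because $\mathcal{U}$-centering forces $C_{ii}=0$. Once this is observed, the remainder of the argument is routine, and the finite dimensionality of $\mathcal{H}_n$ trivializes completeness so that no functional-analytic subtleties arise. A natural follow-up remark, which the paper pursues in Theorem \ref{thmMDS}, is to identify $\mathcal{H}_n$ concretely as the $\mathcal{U}$-centered distance matrices of point configurations in some $\mathbb{R}^p$ with $p \le n-2$; that representation is not needed here, but it pins down the dimension of $\mathcal{H}_n$.
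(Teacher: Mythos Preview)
Your proposal is correct and follows essentially the same route as the paper: verify closure under linear combinations via linearity of $\mathcal U$-centering, check the inner-product axioms (with positive-definiteness coming from the sum-of-squares form together with the zero diagonal), and invoke finite dimensionality for completeness. You are in fact slightly more explicit than the paper on two points---the role of the zero diagonal in positive-definiteness and the appeal to finite dimensionality for completeness---both of which the paper treats tersely (the latter only as a remark following the proof).
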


\begin{proof} Let $\mathscr H_n$ denote the linear span of $\mathcal H_n$.
If $\widetilde A, \widetilde B \in \mathcal H_n$ and $c, d \in \mathbb R$, then $(c\widetilde A + d\widetilde
B)_{i,j} =\widetilde{ (cA+dB)}_{i,j},$ so $(c\widetilde A + d\widetilde B) \in \mathscr H_n$. It is also true
that $C \in \mathscr H_n$ implies that $-C \in \mathscr H_n$, and the zero element is the $n \times n$ zero
matrix.

Then since $(c\widetilde A + d\widetilde B)=\widetilde{ (cA+dB)},$ for the inner product, we only need to prove
that for $\widetilde A, \widetilde B, \widetilde C \in \mathcal H_n$ and real constants $c$, the following
statements hold:
\begin{enumerate}
  \item $(\widetilde A \cdot \widetilde A) \geq 0$.
  \item $(\widetilde A \cdot \widetilde A)=0$ only if $\widetilde A = 0$.
  \item $(\widetilde{(cA)} \cdot \widetilde B) = c (\widetilde A \cdot \widetilde B)$
  \item $((\widetilde A + \widetilde B) \cdot \widetilde C) = (\widetilde A \cdot \widetilde C) + (\widetilde
      B \cdot \widetilde C)$.
\end{enumerate}
Statements (i) and (ii) hold because $(\widetilde A \cdot \widetilde A)$ is proportional to a sum of squares.
Statements (iii) and (iv) follow easily from the definition of $\mathscr H_n$, $\widetilde A$, and (\ref{VU}).
\end{proof}

The space $\mathscr H_n$ is finite dimensional because it
is a subspace of the space of all symmetric, zero-diagonal
$n \times n$ matrices.

In what follows, $\mathscr H_n$ denotes the Hilbert space of Theorem \ref{T:Hilbert} with inner product
(\ref{VU}), and $| \widetilde A | = (\widetilde A, \widetilde A)^{1/2}$ is the norm of $\widetilde A$.

\subsection{Sample pdCov and pdCor}

From Theorem \ref{T:Hilbert} a projection operator (\ref{proj})
can be defined in the Hilbert space
$\mathscr H_n$, $n \geq 4$, and applied to define partial distance covariance and partial distance correlation
for random vectors in Euclidean spaces.
Let $\widetilde A, \widetilde B,$ and $\widetilde C$ be elements of
$\mathscr H_n$ corresponding to samples $x, y,$ and $z$, respectively, and let
\begin{equation}\label{proj}
  P_{z^\perp}(x) = \widetilde A - \frac{(\widetilde A \cdot \widetilde C)}{(\widetilde C \cdot \widetilde C)} \,
  \widetilde C,
\qquad
  P_{z^\perp}(y) = \widetilde B - \frac{(\widetilde B \cdot \widetilde C)}{(\widetilde C \cdot \widetilde C)} \,
  \widetilde C,
\end{equation}
denote the orthogonal projection of $\widetilde A(x)$ onto $(\widetilde C(z))^\perp$ and the orthogonal
projection of $\widetilde B(y)$ onto $(\widetilde C(z))^\perp$, respectively. In case $(\widetilde C \cdot
\widetilde C)=0$ the projections are defined $P_{Z^\perp}(x) = \widetilde A$ and $P_{Z^\perp}(y) = \widetilde B$.
Clearly $P_{z^\perp}(x)$ and $P_{z^\perp}(y)$ are elements of $\mathscr H_n$, their dot product is defined by
(\ref{VU}), and we can define an estimator of pdCov($X, Y;Z$) via projections.

\begin{Definition}[Partial distance covariance]\label{def.pdcov}
Let $(x,y,z)$ be a random sample observed from the joint distribution of $(X, Y, Z)$.  The sample partial
distance covariance (pdCov) is defined by
\begin{equation} \label{PDCOV}
\pdCov(x, y; z) = (P_{z^\perp}(x) \cdot P_{z^\perp}(y)),
\end{equation}
where $P_{z^\perp}(x)$, and $P_{z^\perp}(y)$ are defined by (\ref{proj}), and
\begin{align}\label{pdcov}
   (P_{z^\perp}(x) \cdot P_{z^\perp}(y)) &= \frac{1}{n(n-3)} \sum_{i \neq j} (P_{z^\perp}(x))_{i,j}
   (P_{z^\perp}(y))_{i,j}.
\end{align}
\end{Definition}

\begin{Definition}[Partial distance correlation]
Let $(x,y,z)$ be a random sample observed from the joint distribution of $(X, Y, Z)$. Then sample partial
distance correlation is defined as the cosine of the angle $\theta$ between the `vectors' $P_{z^\perp}(x)$
and $P_{z^\perp}(y)$ in the Hilbert space $\mathscr H_n$:
\begin{align}\label{pdcor2}
R^*(x, y;z) := \cos \theta &=
  \frac{(P_{z^\perp}(x) \cdot P_{z^\perp}(y))}{|P_{z^\perp}(x)||P_{z^\perp}(y)|},
  \qquad |P_{z^\perp}(x)||P_{z^\perp}(y)| \neq 0,
\end{align}
and otherwise $R^*(x, y;z):=0$.
\end{Definition}

\subsection{Representation in Euclidean space}

Sample pdCov and pdCor have been defined via projections in the Hilbert space generated by $\mathcal U$-centered distance matrices. In this section, we start to
consider the interpretation of sample pdCov and sample pdCor.

Since pdCov is defined as the
inner product (\ref{pdcov}) of two $\mathcal U$-centered matrices, and (unbiased squared) distance covariance (\ref{VUdcov}) is computed as inner product, a natural question is the following. Are matrices $P_{z^\perp}(x)$ and $P_{z^\perp}(y)$ the $\mathcal U$-centered \emph{Euclidean} distance matrices of samples of points ${\mathbf U}$ and ${\mathbf V}$, respectively? If so, then the sample partial distance covariance (\ref{pdcov}) is distance covariance of $\mathbf U$ and $\mathbf V$, as defined by (\ref{VUdcov}).

For every sample of points ${\mathbf X}=[x_1,\dots,x_n]$, $x_i \in \mathbb R^p$, there is
a $\mathcal U$-centered matrix $\tilde A=\tilde A(x)$ in $\mathscr H_n$. Conversely,
given an arbitrary element $H$ of $\mathscr H_n$, does there exist a configuration
of points ${\mathbf U}=[u_1,\dots,u_n]$ in some Euclidean space $\mathbb R^q$, for some $q \geq 1$, such that the $\mathcal U$-centered Euclidean distance matrix of sample
${\mathbf U}$ is exactly equal to the matrix $H$?
In this section we prove that the answer is yes: $P_{z^\perp}(x)$, $P_{z^\perp}(y)$ of (\ref{pdcov}), and in general every element in $\mathscr H_n$, is the $\mathcal U$-centered distance matrix of some sample of $n$ points in a Euclidean space.

First a few properties of centered
distance matrices are established in the following lemma.

\begin{Lemma}\label{lemmaU}
Let $\widetilde A$ be a $\mathcal U$-centered distance matrix. Then
\begin{enumerate}
\item Rows and columns of $\widetilde A$ sum to zero.
\item $\widetilde {({\widetilde A})} = \widetilde A$. That is, if $B$ is the matrix obtained by
$\mathcal U$-centering an element $\widetilde A \in \mathscr H_n$, $B = \widetilde A$.
\item $\widetilde A$ is invariant to double centering. That is, if $B$ is the matrix obtained by double
    centering
the matrix $\widetilde A$, then $B=\widetilde A$.
\item \label{L4} If $c$ is a constant and $B$ denotes the matrix
obtained by adding $c$ to the off-diagonal elements of $\widetilde A$, then $\widetilde B = \widetilde A$.
\end{enumerate}
\end{Lemma}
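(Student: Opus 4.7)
The plan is to prove (i) first by a direct summation, since parts (ii) and (iii) will then be essentially free, and then to handle (iv) by a short algebraic verification that the centering formula cancels the added constant.

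For (i), I would fix $i$ and sum $\widetilde A_{i,j}$ over $j = 1, \dots, n$. The diagonal contribution at $j = i$ vanishes by definition, and for $j \neq i$ I substitute (\ref{AU}) and split into four sums. Writing $R_i := \sum_\ell a_{i,\ell}$ and $T := \sum_{k,\ell} a_{k,\ell}$, I would use $a_{i,i} = 0$ to move freely between sums over $\ell$ and over $\ell \neq i$, and invoke symmetry of $A$ to rewrite the column-sum contribution $\sum_{j \neq i} \sum_k a_{k,j}$ in terms of $R_i$ and $T$. The four pieces then combine as $R_i - \frac{n-1}{n-2} R_i - \frac{T - R_i}{n-2} + \frac{T}{n-2}$, which telescopes to $0$. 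Column sums follow by the same argument (or by symmetry of $\widetilde A$, which is inherited from $A$). Given (i), every row mean, column mean, and grand mean of $\widetilde A$ is zero and its diagonal is already zero, so applying either the double-centering formula (\ref{Dcenter}) or the $\mathcal U$-centering formula (\ref{AU}) to $\widetilde A$ reproduces it entry by entry, which gives (ii) and (iii).

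For (iv), let $B$ be $\widetilde A$ with $c$ added off-diagonally (so $B$ has zero diagonal). Using (i), the row and column sums of $B$ are each $(n-1)c$ and its grand sum is $n(n-1)c$. Plugging into (\ref{AU}) and simplifying, the two subtracted linear-sum terms and the added grand-sum term collapse to exactly $c$, so $\widetilde B_{i,j} = B_{i,j} - c = \widetilde A_{i,j}$ for $i \neq j$; both diagonals are zero, hence $\widetilde B = \widetilde A$. The only real obstacle I anticipate is the bookkeeping in (i): keeping track of the $j = i$ term via $a_{i,i} = 0$ and using symmetry of $A$ to express the column-sum piece in terms of $R_i$ and $T$. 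Once (i) is in hand, all remaining parts are one-line reductions.
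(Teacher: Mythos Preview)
Your proposal is correct and follows essentially the same route as the paper's proof: a direct row-sum computation for (i) (the paper's version is the identical four-term expansion, just written with $a_{i.}$, $a_{.j}$, $a_{..}$ in place of your $R_i$, $T$), the observation that (ii) and (iii) are immediate once the row/column sums vanish, and the same short calculation for (iv) using $b_{i.}=b_{.j}=(n-1)c$ and $b_{..}=n(n-1)c$.
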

See Appendix \ref{prfLemmaU} for proof of Lemma \ref{lemmaU}.

As Lemma \ref{lemmaU}(iv) is essential for our results, it
becomes clear that we cannot apply double centering as in
the original (biased) definition of distance covariance here.
The invariance with respect to the constant $c$
in (iv) holds for $\mathcal U$-centered matrices
but it does not hold for double centered matrices.

An $n \times n$ matrix $D$ is called \emph{Euclidean} if there exist
points $v_1,\dots,v_n$ in a Euclidean space
such that their Euclidean distance matrix is exactly $D$; that is,
$d_{ij}^2=|v_i-v_j|^2=(v_i-v_j)^T(v_i-v_j)$,
$i,j=1,\dots,n$. Necessary and sufficient conditions that $D$ is
Euclidean are well known results of classical
(metric) MDS. With the help of Lemma \ref{lemmaU}, and certain results
from the theory of MDS, we can find for
each element $H \in \mathscr H_n$ a configuration of points $v_1,\dots,v_n$
in Euclidean space such that their
$\mathcal U$-centered distance matrix is exactly equal to $H$.

A solution to the underlying system of equations to solve for the
points $v$ is found in Schoenberg \cite{schoenberg} and Young and
Householder \cite{yh1938}. It is a classical result that is well known in
(metric) multidimensional scaling. Mardia, Kent, and Bibby \cite{mkb1979}
summarize the result in Theorem 14.2.1, and provide a proof. For an
overview of the methodology see also Cox and Cox \cite{cc2001}, Gower
\cite{gower1966}, Mardia \cite{mardia1978}, and Torgerson \cite{torg1958}.
We apply the converse (b) of Theorem 14.2.1 as stated in Mardia, Kent,
and Bibby \cite{mkb1979}, summarized below.

Let $(d_{ij})$ be a dissimilarity matrix and define $a_{ij}=-\frac{1}{2} d_{ij}^2$. Form the double centered
matrix $\widehat A = (a_{ij}-\bar a_{i.} - \bar a_{.j} + \bar a_{..})$. If $\widehat A$ is positive semi-definite
(p.s.d.) of rank $p$, then a configuration of points corresponding to $\widehat D$ can be constructed as follows.
Let $\lambda_1 \geq \lambda_2 \geq \dots \geq \lambda_p$ be the positive eigenvalues of $\widehat A$, with
corresponding normalized eigenvectors $v_1,\dots,v_p$, such that $v_k^Tv_k=\lambda_k$, $k=1,\dots,p$. Then if $V$
is the $n \times p$ matrix of eigenvectors, the rows of $V$ are $p$-dimensional vectors that have interpoint
distances equal to $(d_{ij})$, and $\widehat A = VV^T$ is the inner product matrix of this set of points. The
solution is constrained such that the centroid of the points is the origin. There is at least one zero eigenvalue
so $p \leq n-1$.

When the matrix $\widehat A$ is not positive semi-definite, this leads us to the \emph{additive constant
problem}, which refers to the problem of finding a constant $c$ such that by adding the constant to all
off-diagonal entries of $(d_{ij})$ to obtain a dissimilarity matrix $D_c$, the resulting double centered matrix
is p.s.d. Let $\widehat A_c(d^2_{ij})$ denote the double centered matrix obtained by double centering $- \frac
12(d^2_{ij}+c(1-\delta^{ij}))$, where $\delta^{ij}$ is the Kronecker delta. Let $\widehat A_c(d_{ij})$ denote the
matrix obtained by double centering $- \frac 12(d_{ij}+c(1-\delta^{ij}))$. The smallest value of $c$ that makes
$\widehat A_c(d^2_{ij})$ p.s.d. is $c^*=-2 \lambda_n$, where $\lambda_n$ is the smallest eigenvalue of $\widehat
A_0(d^2_{ij})$. Then $\widehat A_c(d^2_{ij})$ is p.s.d. for every $c \geq c^*$.
The number of positive eigenvalues of the p.s.d. double centered matrix $\widehat A_c(d^2_{ij})$ determines the
dimension required for the representation in Euclidean space.

However, we require a constant to be added to the elements $d_{ij}$ rather than $d^2_{ij}$. That is, we require a
constant $c$, such that the dissimilarities $d_{ij}^{(c)}=d_{ij}+c(1-\delta^{ij})$ are Euclidean. The solution by
Cailliez \cite{cailliez} is $c^*$, where $c^*$ is the largest eigenvalue of a $2n \times 2n$ block matrix $$
  \begin{bmatrix}
    0 & \widehat A_0(d^2_{ij}) \\
    I & \widehat A_0(d_{ij}) \\
  \end{bmatrix},
$$ where $0$ is the zero matrix and $I$ is the identity matrix of size $n$
 (see Cailliez \cite{cailliez} or
Cox and Cox \cite[Sec.~2.2.8]{cc2001} for details). This result guarantees that
there exists a constant $c^*$ such that the adjusted dissimilarities $d_{ij}^{(c)}$ are Euclidean. In this case at most $n-2$ dimensions are required (Cailliez \cite[Theorem~1]{cailliez}).

Finally, given an arbitrary element $H$ of $\mathscr H_n$, the problem is to find a configuration of points
$\mathbf V=[v_1,\dots,v_n]$ such that the $\mathcal U$-centered distance matrix of $\mathbf V$ is exactly equal to the
element $H$. Thus, if $H=(h_{ij})$ we are able to find points $\mathbf V$ such that the Euclidean distance matrix
of $\mathbf V$ equals $H_c=(h_{ij}+c(1-\delta^{ij}))$, and we need $\widetilde{ H}_c=H$. Now since  $(c\widetilde
A + d\widetilde B)=\widetilde{ (cA+dB)},$ we can apply Lemma \ref{lemmaU} to $H$, and $\widetilde{H}_c=H$ follows
from Lemma \ref{lemmaU}(ii) and Lemma \ref{lemmaU}(iv). Hence, by applying classical MDS with the additive
constant theorem, and Lemma \ref{lemmaU} (ii) and (iv), we obtain the configuration of points $\mathbf V$ such
that their $\mathcal U$-centered distances are exactly equal to the element $H \in \mathscr H_n$. Lemma \ref{lemmaU}(iv)
also shows that the inner product is invariant to the constant $c$.

This establishes our theorem on representation in Euclidean space.

\begin{thm}\label{thmMDS}
Let $H$ be an arbitrary element of the Hilbert space $\mathscr H_n$ of $\mathcal U$-centered distance matrices. Then there
exists a sample $v_1,\dots,v_n$ in a Euclidean space of dimension at most $n-2$, such that the $\mathcal U$-centered
distance matrix of $v_1,\dots,v_n$ is exactly equal to $H$.
\end{thm}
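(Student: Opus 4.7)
The plan is to follow the outline already sketched in the paragraph preceding the theorem: given $H=(h_{ij}) \in \mathscr H_n$, view $H$ as a generalized dissimilarity matrix on $n$ points (symmetric, zero diagonal, but possibly with negative entries), produce a constant $c$ and a configuration $v_1,\dots,v_n$ in Euclidean space whose ordinary Euclidean distance matrix is $H_c:=(h_{ij}+c(1-\delta^{ij}))$, and then check that $\mathcal U$-centering $H_c$ returns $H$ exactly.

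First I would invoke Cailliez's additive constant theorem, applied to the symmetric zero-diagonal matrix $H$. The theorem supplies a scalar $c^*$, namely the largest eigenvalue of the $2n \times 2n$ block matrix displayed above, such that for every $c \geq c^*$ the adjusted dissimilarities $h_{ij}+c(1-\delta^{ij})$ are genuinely Euclidean. Classical MDS then produces the desired configuration: form the double-centered matrix of $-\tfrac12(h_{ij}+c(1-\delta^{ij}))^2$, which is now positive semi-definite of rank at most $n-2$ by Cailliez's bound, and read off the points $v_1,\dots,v_n$ from a spectral decomposition $\widehat A_c = VV^T$. The rows of $V$ are the sought-after points in a Euclidean space of dimension $q \leq n-2$, and their pairwise Euclidean distance matrix is exactly $H_c$.

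It remains to show that the $\mathcal U$-centered distance matrix of this configuration is $H$ itself. Denote the $\mathcal U$-centering of $H_c$ by $\widetilde{H_c}$. By Lemma \ref{lemmaU}(iv), adding a common constant to the off-diagonal entries of a symmetric zero-diagonal matrix does not change its $\mathcal U$-centered version, so $\widetilde{H_c}=\widetilde H$. Because $H$ already lies in $\mathscr H_n$, the linear span of $\mathcal U$-centered matrices, the idempotence statement Lemma \ref{lemmaU}(ii) (extended by linearity from $\widetilde A$ with $A \in \mathscr S_n$ to arbitrary elements of $\mathscr H_n$) gives $\widetilde H = H$. Combining these identities, $\widetilde{H_c}=H$, which is precisely the claim.

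The main obstacle, and the point worth stating carefully, is the combination of two conceptually distinct facts: Cailliez's theorem handles the fact that $H$ may have \emph{negative} off-diagonal entries (so it is not literally a metric dissimilarity), while Lemma \ref{lemmaU}(ii) and (iv) together absorb the additive constant $c$ and guarantee that the two layers of centering (first Cailliez's adjustment, then $\mathcal U$-centering) collapse back to $H$. Neither ingredient alone suffices, and the proof is essentially the observation that in the linear category of $\mathcal U$-centered matrices the additive constant problem is trivial because the constant is invisible to the centering operation.
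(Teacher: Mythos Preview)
Your proposal is correct and follows essentially the same route as the paper: apply Cailliez's additive constant theorem to the symmetric zero-diagonal matrix $H$ to obtain a Euclidean $H_c$, recover the points $v_1,\dots,v_n$ in dimension at most $n-2$ via classical MDS, and then combine Lemma~\ref{lemmaU}(iv) and (ii) to conclude $\widetilde{H_c}=\widetilde H=H$. Your parenthetical remark about extending Lemma~\ref{lemmaU}(ii) by linearity to all of $\mathscr H_n$ is exactly the paper's observation that $(c\widetilde A + d\widetilde B)=\widetilde{(cA+dB)}$, so every element of $\mathscr H_n$ is already of the form $\widetilde C$ for some $C\in\mathscr S_n$ and the lemma applies verbatim.
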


\begin{remark}The above details also serve to illustrate why a Hilbert space
of double centered matrices (as applied in the original, biased statistic $\mathcal V_n^2$) is not applicable for
a meaningful definition of partial distance covariance. The diagonals of double centered distance matrices are
not zero, so we cannot get an exact solution via MDS, and the inner product would depend on $c$. Another problem
is that while $\mathcal V_n^2$ is always non-negative, the inner product of projections could easily be
negative.
\end{remark}

\paragraph{Methods for Dissimilarities} In community ecology and other fields of application,
it is often the case that only the (non-Euclidean) dissimilarity matrices are
available (see e.g.\ the genetic distances of Table \ref{T:maize7}).
Suppose that the dissimilarity matrices are symmetric with zero
diagonal. An application of Theorem \ref{thmMDS} provides methods for
this class of non-Euclidean dissimilarities.
In this case, Theorem \ref{thmMDS} provides that there exists
samples in Euclidean space such that their $\mathcal U$-centered Euclidean distance
matrices are equal to the dissimilarity matrices. Thus, to apply distance
correlation methods to this type of problem, one only needs to obtain the
Euclidean representation. Existing software implementations of classical
MDS can be applied to obtain the representation derived above. For
example, classical MDS based on the method outlined in Mardia
\cite{mardia1978} is implemented in the R function \texttt{cmdscale},
which is in the \emph{stats} package for R. The
\texttt{cmdscale} function includes options to apply the additive constant
of Cailliez \cite{cailliez} and to
specify the dimension. The matrix of points $\mathbf V$ is returned in
the component \texttt{points}. For an exact representation, we can specify
the dimension argument equal to $n-2$.

\begin{Example}\label{ex0}
To illustrate application of Theorem \ref{thmMDS} for non-Euclidean
dissimilarities, we computed the Bray-Curtis dissimilarity matrix of the
\texttt{iris} \emph{setosa} data, a four-dimensional data set available
in R. The Bray-Curtis dissimilarity
defined in Cox and Cox \cite[Table~1.1]{cc2001} as
$$
\delta_{ij} = \frac{1}{p} \frac{\sum_k |x_{ik}
- x_{jk}|}{\sum_k(x_{ik} + x_{jk})},
\qquad x_i, x_j \in \mathbb R^p,
$$
is not a distance since it does not satisfy the triangle inequality.
A Bray-Curtis method is available in the
\texttt{distance} function of the \emph{ecodist} package for R
\cite{ecodist}. We find a configuration of 50
points in $\mathbb R^{48}$ that have $\mathcal U$-centered distances equal to the $\mathcal U$-centered dissimilarities. The MDS
computations are handled by the R function
\texttt{cmdscale}. Function \texttt{Ucenter}, which
implements $\mathcal U$-centering, is in the R package \emph{pdcor} \cite{pdcor}.
\begin{verbatim}
> x <- iris[1:50, 1:4]
> iris.d <- distance(x, method="bray-curtis")
> AU <- Ucenter(iris.d)
> v <- cmdscale(as.dist(AU), k=48, add=TRUE)$points
\end{verbatim}
The points \texttt{v} are a $50 \times 48$ data matrix, of 50 points
in $\mathbb R^{48}$. Next we compare the
$\mathcal U$-centered distance matrix of the points \texttt{v} with the original
object \texttt{AU} from $\mathscr H_n$:
\begin{verbatim}
> VU <- Ucenter(v)
> all.equal(AU, VU)
[1] TRUE
\end{verbatim}
The last line of output shows that the points \texttt{v} returned by \texttt{cmdscale} have $\mathcal U$-centered distance matrix \texttt{VU}
equal to our original element \texttt{AU} of the Hilbert space.
\hfill $\diamond$
\end{Example}

Example \ref{ex0} shows that the sample distance covariance can be
defined for dissimilarities via the inner
product in $\mathscr H_n$. Alternately one can compute
$\mathcal V^2_n(\mathbf U, \mathbf V)$, where $\mathbf U$,
$\mathbf V$ are the Euclidean representations corresponding
to the two $\mathcal U$-centered dissimilarity matrices that
exist by Theorem \ref{thmMDS}. Using the corresponding
definitions of distance variance, sample distance
correlation for dissimilarities is well defined by
(\ref{Rcorrected}) or $\mathcal R^2_n(\mathbf U, \mathbf V)$.
Similarly one can define pdCov and pdCor when one or more of
the dissimilarity matrices of the samples is not
Euclidean distance. However, as in the case of Euclidean distance,
we need to define the corresponding population coefficients, and
develop a test of independence.  For the population definitions
see Section \ref{S4}.

\subsection{Simplified computing formula for pdCor}

Let
\begin{align}\label{Rcorrected}
R^*_{x,y} &:=\left\{
            \begin{array}{ll}
\frac{(\widetilde A \cdot \widetilde B)}{|\widetilde A| |\widetilde B|}, & \hbox{$|\widetilde A| |\widetilde B|
\neq 0$;} \\
              0, & \hbox{$|\widetilde A| |\widetilde B| = 0$,}
            \end{array}
          \right.
\end{align}
where $\widetilde A=\widetilde A(x), \widetilde B=\widetilde B(y)$ are the $\mathcal U$-centered distance matrices of the
samples $x$ and $y$, and $|\widetilde A|=(\widetilde A \cdot \widetilde A)^{1/2}$. The statistics $R^*_{x,y}$ and
$R^*(x, y;z)$ take values in $[-1, 1]$, but they are measured in units comparable to the squared distance
correlation $\mathcal R^2_n(\mathbf X, \mathbf Y)$.

\begin{Proposition}\label{P:pdcor2}
If  $(1-(R^*_{x,z})^2)(1-(R^*_{y,z})^2) \neq 0$, a computing formula for $R^*_{x, y ; z}$ in Definition
(\ref{pdcor2}) is
\begin{align}
R^*_{x, y ; z} =
\frac{R^*_{x, y} - R^*_{x, z}R^*_{y, z}}{\sqrt{1-(R^*_{x,z})^2}{\sqrt{1-(R^*_{y,z})^2}}}. \label{R.alt}
\end{align}

\end{Proposition}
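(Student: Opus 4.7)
The strategy is a direct Hilbert-space computation in $\mathscr H_n$, exploiting bilinearity of the inner product $(\cdot\,,\cdot)$ and the definition of the orthogonal projection. The proof is essentially the textbook derivation of the classical partial correlation formula, transplanted to $\mathscr H_n$, so there is no deep obstacle; the work is in bookkeeping and checking that the denominator is nonzero whenever the hypothesis holds.

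First I would expand the numerator of $R^*(x,y;z)$. Using the projection formulas (\ref{proj}) and bilinearity,
\begin{equation*}
(P_{z^\perp}(x) \cdot P_{z^\perp}(y))
= (\widetilde A \cdot \widetilde B) - \frac{(\widetilde A \cdot \widetilde C)(\widetilde B \cdot \widetilde C)}{(\widetilde C \cdot \widetilde C)},
\end{equation*}
since the two cross terms each reduce to $-(\widetilde A \cdot \widetilde C)(\widetilde B \cdot \widetilde C)/(\widetilde C \cdot \widetilde C)$ and the quadratic term in $\widetilde C$ gives the same expression with a $+$ sign, leaving one copy after cancellation. Dividing by $|\widetilde A||\widetilde B|$ and recognizing the ratios as $R^*_{x,y}$, $R^*_{x,z}$, $R^*_{y,z}$ via (\ref{Rcorrected}), I obtain
\begin{equation*}
\frac{(P_{z^\perp}(x) \cdot P_{z^\perp}(y))}{|\widetilde A||\widetilde B|}
= R^*_{x,y} - R^*_{x,z}R^*_{y,z}.
\end{equation*}

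Next I would compute the two norm factors in the denominator of $R^*(x,y;z)$. By the same bilinear expansion,
\begin{equation*}
|P_{z^\perp}(x)|^2 = |\widetilde A|^2 - \frac{(\widetilde A \cdot \widetilde C)^2}{(\widetilde C \cdot \widetilde C)}
= |\widetilde A|^2\!\left(1 - (R^*_{x,z})^2\right),
\end{equation*}
and analogously $|P_{z^\perp}(y)|^2 = |\widetilde B|^2(1 - (R^*_{y,z})^2)$. Taking square roots and combining with the previous step, the factors $|\widetilde A|$ and $|\widetilde B|$ cancel between numerator and denominator, yielding the claimed formula (\ref{R.alt}).

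The only thing worth double-checking is consistency at the boundary. The hypothesis $(1-(R^*_{x,z})^2)(1-(R^*_{y,z})^2)\neq 0$ forces $|P_{z^\perp}(x)||P_{z^\perp}(y)| \neq 0$, so we are squarely in the first branch of Definition (\ref{pdcor2}). It also implicitly requires $|\widetilde A|,|\widetilde B| \neq 0$; otherwise the ratios $R^*_{x,z}$, $R^*_{y,z}$ would equal zero by the convention in (\ref{Rcorrected}), making the factor $1-(R^*_{x,z})^2 = 1$ vacuously, but then the projection $P_{z^\perp}(x)=\widetilde A = 0$ and the statement would collapse to $0=0$ consistently. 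The case $(\widetilde C \cdot \widetilde C)=0$ is the other degeneracy: then the projections equal $\widetilde A$ and $\widetilde B$, and $R^*_{x,z}=R^*_{y,z}=0$ by (\ref{Rcorrected}), so both sides of (\ref{R.alt}) reduce to $R^*_{x,y}$. Thus the identity holds in every case covered by the hypothesis, with no genuine obstacle beyond algebraic care.
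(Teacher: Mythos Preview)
Your proposal is correct and follows essentially the same approach as the paper's own proof: expand the inner product of the projections by bilinearity to obtain $(\widetilde A \cdot \widetilde B) - (\widetilde A \cdot \widetilde C)(\widetilde B \cdot \widetilde C)/(\widetilde C \cdot \widetilde C)$, compute the squared norms of the projections as $|\widetilde A|^2(1-(R^*_{x,z})^2)$ and $|\widetilde B|^2(1-(R^*_{y,z})^2)$, and cancel the common factor $|\widetilde A||\widetilde B|$. Your treatment of the degenerate cases ($|\widetilde A|=0$, $|\widetilde B|=0$, or $|\widetilde C|=0$) matches the paper's case analysis as well.
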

See Appendix \ref{prfRxy} for a proof.

Equation (\ref{R.alt}) provides a simple and familiar form of computing formula for the partial distance
correlation. The computational algorithm is easily implemented, as summarized below.

\emph{Algorithm to compute partial distance correlation $R^*_{x,y ; z}$
from Euclidean distance matrices $A=(|x_i-x_j|)$, $B=(|y_i-y_j|)$, and $C=(|z_i-z_j|)$}:

\begin{enumerate}
\item Compute the $\mathcal U$-centered matrices $\widetilde A$, $\widetilde B$, and $\widetilde C$,
using
\begin{equation*}
\widetilde A_{i,j}=
       a_{i,j} - \frac {a_{i.}}{n-2}  - \frac{a_{.j}}{n-2}
 + \frac{a_{..}}{(n-1)(n-2)}, \qquad i \neq j,
 \end{equation*}
and $\widetilde A_{i,i}=0$.
\item Compute inner products and norms
using $$ (\widetilde A \cdot \widetilde B)=\frac{1}{n(n-3)} \sum_{i \neq j} \widetilde A_{i,j}\widetilde
B_{i,j},
\qquad |\widetilde A|=(\widetilde A \cdot \widetilde A)^{1/2}
$$ and $R^*_{x,y}$, $R^*_{x,z}$, and $R^*_{y,z}$ using $ R^*_{x,y}=\frac{(\widetilde A \cdot \widetilde
B)}{|\widetilde A| |\widetilde B|}.$
\item If $R_{x,z}^2\neq 1$ and $R_{y,z}^2 \neq 1$
$$ R^*_{x, y ; z} =
\frac{R^*_{x, y} - R_{x, z}R^*_{y, z}}{\sqrt{1-(R^*_{x,z})^2}{\sqrt{1-(R^*_{y,z})^2}}},
$$
\end{enumerate}
otherwise apply the definition (\ref{pdcor2}).

Note that it is not necessary to explicitly compute the projections, when (\ref{R.alt}) is applied. The above
algorithm has a straightforward translation to code; see e.g.\ the \emph{pdcor} package
\cite{pdcor} for an implementation in R.

\section{Population Coefficients and Inference\label{S4}}

\subsection{Population coefficients}

The population distance covariance has been defined in terms of the joint
and marginal characteristic
functions of the random vectors. Here we give an equivalent definition following Lyons \cite{lyons2013}, who generalizes distance correlation to separable Hilbert spaces. Instead of starting with the distance matrices
$(a_{ij})=(|x_i-x_j|_p)$ and $(b_{ij})=(|y_i-y_j|_q),$ the starting point of the population definition are the
bivariate distance functions $ a(x, x'):= |x-x'|_p $ and $ b(y,y')=|y- y'|_q,$
where $x, x'$ are realizations of
the random variables $X$ and $y, y'$ are realizations of the random variable $Y$.

We can also consider the random versions. Let $X \in \mathbb R^p$ and $Y \in \mathbb R^q$ be random variables
with finite expectations. The random distance functions are $ a(X, X'):= |X-X'|_p $ and $ b(Y,Y')=|Y- Y'|_q $.
Here the primed random variable $X$ denotes an independent and identically distributed (iid) copy of the variable
$X$, and similarly $Y, Y'$ are iid.

The population operations of double centering involves expected values
with respect to the underlying population random variable.
For a given random variable $X$ with cdf $F_X$, we define the corresponding
\emph{double centering function with respect to} $X$ as
\begin{align} \label{e:HatAx}
A_X(x, x') &:= a(x, x') - \int_{\mathbb R^p} a(x, x') dF_X(x')
   - \int_{\mathbb R^p} a(x, x') dF_X(x)
\\ & \qquad + \int_{\mathbb R^p} \int_{\mathbb R^p} a(x, x') dF_X(x') dF_X(x),
\notag
\end{align}
provided the integrals exist.

Here $A_X(x, x')$ is a real valued function of two realizations
of $X$, and the subscript $X$ references the underlying random variable.
Similarly for $X, X'$ iid with cdf $F_X$, we define the random variable
$ A_X$ as an abbreviation for $ A_X(X, X')$, which is a
random function of $(X, X')$. Similarly we define
\begin{align*}
 B_Y(y, y') &:= b(y, y') - \int_{\mathbb R^q} b(y, y') dF_Y(y')
   - \int_{\mathbb R^q} b(y, y') dF_Y(y)
\\ & \qquad + \int_{\mathbb R^q} \int_{\mathbb R^q} b(y, y') dF_Y(y') dF_Y(y),
\end{align*}
and the random function $ B_Y := B_Y(Y, Y')$.

Now for $ X, X'$ iid, and $Y, Y'$ iid, such that $X, Y$ have finite expectations, the population distance
covariance $\mathcal V(X,Y)$ is defined by
\begin{equation}\label{e:dcov2}
\mathcal V^2(X,Y):= E[ A_X\,  B_Y].
\end{equation}
The definition (\ref{e:dcov2}) of $\mathcal V^2(X, Y)$ is equivalent to the original definition (\ref{e:Aw}).
However, as we will see in the next sections, (\ref{e:dcov2})
is an appropriate starting point to develop the
corresponding definition of pdCov and pdCor population coefficients.

More generally, we can consider {\it dissimilarity functions} $a(x, x')$.
In this paper, a dissimilarity function is a
symmetric function $a(x,x'):\mathbb R^p \times \mathbb R^p \to \mathbb R$ with $a(x,x) = 0$. The corresponding
random dissimilarity functions $a(X, X')$ are random variables such that $a(X, X')=a(X', X)$, and $a(X, X)=0$.

Double centered dissimilarities are formally defined by the same
equations as double centered distances in (\ref{e:HatAx}).

The following lemma establishes that linear combinations of
double-centered dissimilarities are double-centered dissimilarities.

\begin{Lemma}\label{lemma3}
Suppose that $X \in \mathbb R^p$, $Y \in \mathbb R^q$,
$a(x,x')$ is a dissimilarity on $\mathbb R^p \times \mathbb R^p$,
and
$b(y,y')$ is a dissimilarity on $\mathbb R^q \times \mathbb R^q$.
Let $  A_X(x,x')$ and $  B_Y(y,y')$ denote the
dissimilarity obtained by double-centering $a(x,x')$ and $b(y,y')$,
respectively. Then if $c_1$ and $c_2$ are real scalars,
$$
c_1   A_X(x,x')+ c_2   B_Y(y,y') =   D_T(t,t'),
$$
where $T=[X,Y] \in \mathbb R^p \times \mathbb R^q$, and $  D_T(t,t')$
is the result of double-centering $d(t,t')=c_1 a(x,x')+ c_2 b(y,y')$.
\end{Lemma}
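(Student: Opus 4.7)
The plan is to unfold $D_T(t,t')$ directly from the definition (\ref{e:HatAx}), exploit the linearity of the double-centering operation (which is itself just a sequence of integrations and subtractions), and then collapse each integral against the joint law $F_T$ into an integral against the appropriate marginal, since the summands of $d(t,t')$ each depend on only one of the two components of $t=(x,y)$.

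More concretely, I would first verify that $d(t,t')=c_1 a(x,x')+c_2 b(y,y')$ is indeed a (symmetric, zero-on-the-diagonal) dissimilarity on $\mathbb R^{p+q}\times\mathbb R^{p+q}$, so that the definition (\ref{e:HatAx}) of $D_T$ applies. Then I would write out $D_T(t,t')$ as $d(t,t')$ minus the two single integrals against $F_T$ plus the double integral against $F_T\times F_T$. By linearity of the integral, every one of these four terms splits as $c_1$ times a term built from $a$ plus $c_2$ times a term built from $b$.

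The main substantive observation, and really the only one requiring any care, is that for a function depending only on the $x$-coordinate of $t=(x,y)$ we have
\begin{equation*}
\int_{\mathbb R^{p+q}} a(x,x')\, dF_T(t') \;=\; \int_{\mathbb R^{p}} a(x,x')\, dF_X(x'),
\end{equation*}
and analogously for the other variable of integration and for the double integral; the $Y$-marginal simply disappears. The same identity holds with the roles of $a$ and $b$ (and $X$ and $Y$) swapped. Applying this reduction term-by-term, the $c_1$-piece of $D_T(t,t')$ matches the definition of $A_X(x,x')$ in (\ref{e:HatAx}) exactly, and the $c_2$-piece matches $B_Y(y,y')$. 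Summing gives $D_T(t,t')=c_1 A_X(x,x')+c_2 B_Y(y,y')$.

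There is no real obstacle here; the lemma is essentially a bookkeeping statement saying that ``double-centering'' is a linear operation and that the only part of the joint law $F_T$ that enters is the relevant marginal. The only mild subtlety is notational: keeping straight which variable of integration is $t$ versus $t'$, so that the single-variable centerings on each side correspond correctly. Existence of all the integrals is inherited from the assumption that $a(x,x')$ and $b(y,y')$ admit double-centerings in the first place, so no additional moment hypothesis needs to be introduced.
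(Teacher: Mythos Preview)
Your proposal is correct and follows essentially the same route as the paper: expand $D_T(t,t')$ from the definition, use linearity of the integral, and reduce the integrals against $F_T$ to integrals against the appropriate marginals $F_X$ and $F_Y$. The only cosmetic difference is that the paper first disposes of the scalar case and then treats the sum $A_X+B_Y$ separately, whereas you handle $c_1 a + c_2 b$ in one pass; your explicit isolation of the marginalization step is, if anything, cleaner than the paper's version.
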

See Appendix \ref{prfLemma3} for the proof.

The linear span of double-centered distance functions $  A_X(x,x')$
is a subspace of the space of double-centered dissimilarity functions with
the property $a(x,x') = O(|x| + |x'|)$. In this case all integrals in
(\ref{e:HatAx}) and $  A_X$ are finite if $X$ has finite expectation.
The linear span of the random functions $  A_X$ for random
vectors $X$ with $E|X| < \infty$ is clearly a linear space, such that the
linear extension of (\ref{e:dcov2})
$$
E[  A_X   B_Y] = \mathcal V^2(X,Y)
$$
to the linear span is an inner product space or pre-Hilbert space;
its completion with respect to the
metric arising from its inner product
\begin{equation}\label{e:IP}
(  A_X \cdot   B_Y) :=
E[  A_X   B_Y]
\end{equation}
(and norm) is a Hilbert space
which we denote by $\mathscr H$.

\begin{Definition}[Population partial distance covariance]\label{defPDCOV}
Introduce the scalar coefficients $$
\alpha := \frac{\mathcal V^2(X,Z)}{\mathcal V^2(Z,Z)}, \qquad
\beta := \frac{\mathcal V^2(Y,Z)}{\mathcal V^2(Z,Z)}.
$$ If $\mathcal V^2(Z,Z)=0$ define $\alpha=\beta=0$. The double-centered
projections of $  A_X$ and
$  B_Y$ onto the orthogonal complement of  $  C_Z$
in Hilbert space $\mathscr H$ are defined
$$
  P_{Z^\perp}(X):=  A_X(X, X')- \alpha   C_Z(Z, Z'), \quad
   P_{Z^\perp}(Y):=  B_Y(Y, Y')- \beta   C_Z(Z, Z'),
$$ or in short $P_{Z^\perp}(X)=  A_X - \alpha   C_{Z}$
and $P_{Z^\perp}(Y)=  B_Y -
\beta   C_Z$, where $  C_Z$ denotes double-centered with
respect to the random variable $Z$.

The population partial
distance covariance is defined by the inner product
$$
(P_{Z^\perp}(X) \cdot P_{Z^\perp}(Y)) := E[(
A_X- \alpha   C_{Z}) \cdot
(  B_Y- \beta   C_{Z})].
$$
\end{Definition}

\begin{Definition}[Population pdCor]\label{defPDCOR}
Population partial distance correlation is defined
\begin{align*}
\mathcal R^*(X, Y;Z) :=
  \frac{(P_{Z^\perp}(X) \cdot P_{Z^\perp}(Y))}
{|P_{Z^\perp}(X)||P_{Z^\perp}(Y)|},
\end{align*}
where $|P_{Z^\perp}(X)|=(P_{Z^\perp}(X) \cdot P_{Z^\perp}(X))^{1/2}.$
If $|P_{Z^\perp}(X)||P_{Z^\perp}(Y)| = 0$
we define $\mathcal R^*(X, Y;Z) = 0$.
\end{Definition}

Note that if $\alpha=\beta=0$, we have $ (P_{Z^\perp}(X) \cdot P_{Z^\perp}(Y))
= E[  A_X \cdot   B_Y] =
\mathcal V^2(X,Y), $ and  $\mathcal R^*(X, Y; Z) = \mathcal R^*(X, Y)$.

The population coefficient of partial distance correlation can
be evaluated in terms of the pairwise distance correlations using
Formula (\ref{pdCor.pop}) below. Theorem \ref{T:pdCor.pop} establishes
that (\ref{pdCor.pop}) is equivalent to Definition \ref{defPDCOR},
and therefore serves as an alternate definition of population
pdCor.

\begin{thm}[Population pdCor]\label{T:pdCor.pop}
The following definition of population partial distance correlation
is equivalent to Definition \ref{defPDCOR}.
\begin{align}\label{pdCor.pop}
& \mathcal R^*(X,Y;Z) = \\ & \quad
\left\{
  \begin{array}{ll}
 \frac{\mathcal R^2(X,Y) - \mathcal R^2(X,Z) \mathcal R^2(Y,Z)}
{\sqrt{1-\mathcal R^4(X,Z)} \sqrt{1-\mathcal R^4(Y,Z)}},
 & \hbox{$\mathcal R(X,Z)\neq 1$ \textrm{and} $\mathcal R(Y,Z)\neq 1$;} \\
    0, & \hbox{$\mathcal R(X,Z)=1$ \textrm{or} $\mathcal R(Y,Z)=1$.}
  \end{array}
\right.
\notag
\end{align}
where $\mathcal R(X, Y)$ denotes the population distance correlation.
\end{thm}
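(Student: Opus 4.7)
The plan is to mirror at the population level the computation of Proposition \ref{P:pdcor2}, exploiting the fact that $(\cdot\,\cdot)$ on $\mathscr H$ defined in \eqref{e:IP} satisfies $(  A_X \cdot   A_X) = \mathcal V^2(X,X)$, $(  B_Y \cdot   B_Y) = \mathcal V^2(Y,Y)$, $(  C_Z \cdot   C_Z) = \mathcal V^2(Z,Z)$, and $(  A_X \cdot   B_Y) = \mathcal V^2(X,Y)$, etc. The whole argument is then a straightforward bilinear expansion of the projections, combined with the chosen scalars $\alpha$ and $\beta$.

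First I would expand the numerator of $\mathcal R^*(X,Y;Z)$ by bilinearity:
\begin{align*}
(P_{Z^\perp}(X) \cdot P_{Z^\perp}(Y))
&= (  A_X \cdot   B_Y) - \alpha(  C_Z \cdot   B_Y) - \beta(  A_X \cdot   C_Z) + \alpha\beta(  C_Z \cdot   C_Z)\\
&= \mathcal V^2(X,Y) - \frac{\mathcal V^2(X,Z)\mathcal V^2(Y,Z)}{\mathcal V^2(Z,Z)},
\end{align*}
where the last equality uses the definitions of $\alpha,\beta$ and collects terms. Next I would compute each denominator factor in exactly the same way,
\begin{align*}
|P_{Z^\perp}(X)|^2 &= \mathcal V^2(X,X) - 2\alpha\mathcal V^2(X,Z) + \alpha^2 \mathcal V^2(Z,Z) \\
&= \mathcal V^2(X,X)\!\left(1 - \tfrac{\mathcal V^4(X,Z)}{\mathcal V^2(X,X)\mathcal V^2(Z,Z)}\right)
= \mathcal V^2(X,X)\bigl(1 - \mathcal R^4(X,Z)\bigr),
\end{align*}
and symmetrically $|P_{Z^\perp}(Y)|^2 = \mathcal V^2(Y,Y)(1 - \mathcal R^4(Y,Z))$.

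Dividing, the factor $\mathcal V(X,X)\mathcal V(Y,Y)$ in the denominator turns the leading numerator term $\mathcal V^2(X,Y)$ into $\mathcal R^2(X,Y)$, and the second numerator term becomes
\[
\frac{\mathcal V^2(X,Z)\mathcal V^2(Y,Z)}{\mathcal V^2(Z,Z)\,\mathcal V(X,X)\mathcal V(Y,Y)}
= \frac{\mathcal V^2(X,Z)}{\mathcal V(X,X)\mathcal V(Z,Z)}\cdot\frac{\mathcal V^2(Y,Z)}{\mathcal V(Y,Y)\mathcal V(Z,Z)}
= \mathcal R^2(X,Z)\mathcal R^2(Y,Z),
\]
so that formula \eqref{pdCor.pop} drops out under the standing assumption $\mathcal V^2(Z,Z)>0$ and the non-degeneracy conditions that keep the denominators positive.

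The only genuinely delicate step is the handling of the boundary and degenerate cases, which I would treat at the end. If $\mathcal V^2(Z,Z)=0$ then $\alpha=\beta=0$ by convention, the projections reduce to $  A_X$ and $  B_Y$, and both definitions trivially collapse to $\mathcal R^*(X,Y)$; similarly one must check $|P_{Z^\perp}(X)|=0$ or $|P_{Z^\perp}(Y)|=0$ in either formulation corresponds exactly to $\mathcal R(X,Z)=1$ or $\mathcal R(Y,Z)=1$ (since $1-\mathcal R^4=0$ in $\mathscr H$ forces $  A_X$ to be proportional to $  C_Z$ by Cauchy–Schwarz in $\mathscr H$), at which point Definition \ref{defPDCOR} sets $\mathcal R^*(X,Y;Z)=0$, agreeing with the second clause of \eqref{pdCor.pop}. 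Apart from this bookkeeping, the proof is pure linear algebra in the Hilbert space $\mathscr H$, exactly paralleling the sample-level derivation.
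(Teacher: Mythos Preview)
Your proposal is correct and follows essentially the same approach as the paper: a bilinear expansion of the inner product in $\mathscr H$ to get the numerator and the two norm factors, followed by case analysis for degeneracies. One small imprecision in your boundary-case bookkeeping: the claim that $|P_{Z^\perp}(X)|=0$ corresponds \emph{exactly} to $\mathcal R(X,Z)=1$ overlooks the possibility that $X$ is a.s.\ constant (then $A_X=0$, so $|P_{Z^\perp}(X)|=0$, yet $\mathcal R(X,Z)=0$ by convention); the paper treats this as a separate case, but both definitions still yield $0$ there, so your conclusion is unaffected.
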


The proof of Theorem \ref{T:pdCor.pop} is given in Appendix \ref{prfThm3}

\subsection{Discussion}

Partial distance correlation is a scalar quantity that captures
dependence, while conditional inference (conditional distance
correlation) is a more complex notion as it is a function of the condition.
Although one might hope that partial distance correlation
$\mathcal R^*(X,Y;Z) =0$ if and only if $X$, and $Y$ are
conditionally independent given $Z$, this is, however, not the case.
Both notions capture \emph{overlapping} aspects of dependence,
but mathematically they are not equivalent. The following examples
illustrate that $pdCor(X,Y;Z)=0$
is not equivalent to conditional independence of $X$ and $Y$ given $Z$.

Suppose that $Z_1, Z_2, Z_3$ are iid standard normal variables,
$X=Z_1+Z_3$, $Y=Z_2+Z_3$, and
$Z=Z_3$. Then $\rho(X,Y)=1/2$, $\rho(X,Z)=\rho(Y,Z)=1/\sqrt{2}$,
$pcor(X,Y;Z)=0$, and $X$ and $Y$ are conditionally independent of $Z$.
One can evaluate $\pdCor(X, Y;Z)$ by applying (\ref{pdCor.pop}) and the
following result. Suppose that $(X,Y)$ are jointly bivariate normal with
correlation $\rho$.
Then by \cite[Theorem 7(ii)]{srb07},
\begin{equation}\label{cor2dcor}
\mathcal R^2(X,Y) = \frac{\rho\, \textrm{arcsin} \rho +
  \sqrt{1-\rho^2} - \rho\, \textrm{arcsin}(\rho/2)
  - \sqrt(4-\rho^2) + 1}{1 + \pi/2 - \sqrt{3}}.
\end{equation}
In this example $\mathcal R^2(X,Y)=0.2062$,
$\mathcal R^2(X,Z)=\mathcal R^2(Y,Z)=  0.4319 $, and $\pdCor(X,Y;Z)= 0.0242$.

In the other direction we can construct a trivariate normal $(X, Y, Z)$
such that $\mathcal R^*(X,Y;Z)=0$ but $pcor(X,Y;Z) \neq 0$. According to
Baba, Shibata and Sibuya \cite{bss2004}, if $(X,Y,Z)$ are trivariate
normal, then pcor$(X,Y;Z) = 0$ if and only if $X$ and $Y$ are
conditionally independent given $Z$. A specific numerical example is
as follows. By inverting
equation (\ref{cor2dcor}), given any $\mathcal R^2(X,Y)$,
one can solve for $\rho(X,Y)$ in $[0,1]$. When $\mathcal R(X,Y)=0.04$,
and $\mathcal R(X,Z)=\mathcal R(Y,Z)=0.2$,
the nonnegative solutions are
$\rho(X,Y)=0.22372287$, $\rho(X,Z)=\rho(Y,Z)= 0.49268911$. The corresponding
$3 \times 3$ correlation matrix is positive definite. Let $(X,Y,Z)$
be trivariate normal
with zero means, unit variances, and correlations $\rho(X,Y)=0.22372287$,
$\rho(X,Z)=\rho(Y,Z)= 0.49268911$.
Then pcor$(X,Y;Z)=$ $-0.025116547$
and therefore conditional independence of $X$ and $Y$
given $Z$ does not hold, while $\mathcal R^*(X,Y;Z)$ is exactly zero.

In the non-Gaussian case, it is not true that zero partial correlation
is equivalent to conditional independence, while partial distance correlation
has the advantages that it can capture non-linear dependence, and is
applicable to vector valued random variables.

\subsection{Inference}

We have defined partial distance correlation, which is bias corrected and has an informal interpretation, but we
also want to determine whether $R^*(x,y;z)$ is significantly different from zero, which is not a trivial problem.
Although theory and implementation for a consistent test of multivariate independence based on distance
covariance is available \cite{srb07, energy}, clearly the implementation of the test by randomization is not
directly applicable to the problem of testing for zero partial distance correlation.

Since $\pdCor(X,Y;Z)=0$ if and only if $\pdCov(X,Y;Z)=0$, we develop a test for $H_0: \pdCov(X,Y;Z)=0$ vs $H_1:
\pdCov(X,Y;Z)\neq 0$ based on the inner product $(P_{z^\perp}(x) \cdot P_{z^\perp}(y))$. Unlike the problem of
testing independence of $X$ and $Y$, however, we do not have the distance matrices or samples corresponding to
the projections $P_{z^\perp}(x)$ and $P_{z^\perp}(y)$. For a test we need to consider that $P_{z^\perp}(x)$ and
$P_{z^\perp}(y)$ are arbitrary elements of $\mathscr H_n$.

We have proved (Theorem \ref{thmMDS}) that $P_z(x)$ can be represented as the $\mathcal U$-centered distance matrix of
some configuration of $n$ points $\mathbf U$ in a Euclidean space $\mathbb R^p$, $p \leq n-2$, and that $P_z(y)$ has
such a representation $\mathbf V$ in $\mathbb R^q$, $q \leq n-2$. Hence, a test for $\pdCov(X,Y;Z)=0$ can be defined
by applying the distance covariance test statistic $\mathcal V^2_n(\mathbf U, \mathbf V)$. The resulting test
procedure is practical to apply, with similar computational complexity as the original dCov test of independence.
One can apply the test of independence implemented by permutation bootstrap in the
\texttt{dcov.test} function of the energy package \cite{energy} for R. Alternately
a test based on the inner product (\ref{PDCOV}) is implemented in the \emph{pdcor} \cite{pdcor} package.

\section{Examples and Applications}

In this section we summarize simulation results for tests of the
null hypothesis of zero partial distance correlation and two applications.
We compared our simulation results with two types of tests based on linear correlation. Methods for dissimilarities are illustrated using data on
genetic distances for samples of maize in Example \ref{ex6}, and variable
selection is discussed in Example \ref{ex7}.

\subsection{Partial distance covariance test}
The test for zero partial distance correlation is a test of whether the inner product $(P_{Z^\perp}(X) \cdot
P_{Z^\perp}(Y))$ of the projections is zero. We obtain the corresponding samples $\mathbf U, \mathbf V$ such that
the $\mathcal U$-centered distances of $\mathbf U$ and $\mathbf V$ are identical to $P_z(x)$ and $P_z(y)$, respectively,
using classical (metric) MDS. We used the \texttt{cmdscale} function in R to obtain the samples $\mathbf U$ and
$\mathbf V$ in both cases. The \emph{pdcov} test applies the inner product of the double centered distance
matrices of $\mathbf U$ and $\mathbf V$ as the test statistic. Alternately, one could apply the original
\emph{dcov} test, to the joint sample $(\mathbf U, \mathbf V)$. These two test statistics are essentially
equivalent, but the \emph{pdcov} test applies an unbiased statistic defined by (\ref{VU}), while the \emph{dcov}
test applies a biased statistic, $n \mathcal V_n^2(\mathbf U, \mathbf V)$ (\ref{e:AnXY}). Both tests are
implemented as permutation tests, where for a test of $\pdCor(X, Y; Z)=0$ the sample indices of the $X$ sample
are randomized for each replicate to obtain the sampling distribution of the test statistic under the null
hypothesis.

The term `permutation test' is sometimes restricted to refer to the exact test, which is implemented by
generating all possible permutations. Except for very small samples, it is not computationally feasible to
generate all permutations, so a large number of randomly generated permutations are used. This approach, which we
have implemented, is sometimes called a randomization test to distinguish it from an exact permutation test.

In our simulations the \emph{pdcov} and \emph{dcov} tests for zero partial distance correlation were equivalent
under null or alternative hypotheses
 in the sense that the type 1 error rates and estimated power agreed to
within one standard error. In power comparisons, therefore, we reported only the \emph{pdcov} result.

\subsection{Partial correlation}

The linear partial correlation $r(x, y; z)$ measures the partial correlation between one dimensional data vectors
$x$ and $y$ with $z$ removed (or controlling for $z$). The sample partial correlation coefficient is
\begin{equation}\label{pcor.stat}
 r(x, y; z) = \frac{r(x, y) - r(x, z) r(y, z)}{\sqrt{(1-r(x,z)^2)(1-r(y,z)^2)}},
\end{equation}
where $r(x, y)$ denotes the linear (Pearson) sample correlation. The partial correlation test is usually implemented as a $t$ test (see e.g.\ Legendre
\cite[p. 40]{legendre2000}). In examples where $x$, $y$, $z$ are one
dimensional, we have included the partial correlation $t$-test in comparisons. However, for small samples
and some non-Gaussian data, the type 1 error rate is inflated.
In cases where type 1 error rate of \emph{pcor} was not controlled we did not
report power.

\subsection{The partial Mantel test}

The partial Mantel test is a test of the hypothesis that there is a linear association between the distances of
$X$ and $Y$, controlling for $Z$. This extension of the Mantel test \cite{mantel1967} was proposed by Smouse, et
al. \cite{sls1986} for a partial correlation analysis on three distance matrices. The Mantel and partial Mantel
tests are commonly applied in community ecology (see e.g.\ Legendre and Legendre
\cite{ll2012}), population genetics, sociology, etc.

Let $U_1,U_2,U_3$ denote the upper triangles of the $n \times n$ distance matrices of the $X$, $Y$ and $Z$
samples, respectively. Then the partial Mantel test statistic is the sample linear partial correlation between
the $n(n-1)/2$ elements of $U_1$ and $U_2$ controlling for $U_3$. Let $u_1, u_2, u_3$ be the corresponding data
vectors obtained by representing $U_1,U_2,U_3$ as vectors. The partial Mantel statistic is $r(u_1,u_2;u_3)$,
computed using formula (\ref{pcor.stat}).

Since $u_i$ are not iid samples, the usual $t$ test is not applicable, so the partial Mantel test is usually
implemented as a permutation (randomization) test. See Legendre \cite{legendre2000} for a detailed algorithm and
simulation study comparing different methods of computing a partial Mantel test statistic. Based on the results
reported by Legendre, we implemented the method of permutation of the raw data. The algorithm is given in detail
on page 44 by Legendre \cite{legendre2000}, and it is very similar to the algorithm we have applied for the
\emph{energy} tests (\emph{pdcov} and \emph{dcov} tests for zero pdCor). This method (permutation of raw data)
for the partial Mantel test is implemented in the \emph{ecodist} package \cite{ecodist} and also the \emph{vegan}
package
\cite{vegan} for R.
For simulations, the \texttt{mantel} function in the \emph{ecodist} package, which is implemented mainly in
compiled code, is much faster than the \texttt{mantel} or \texttt{mantel.partial} functions in the \emph{vegan}
package, which are implemented in R.

\begin{remark}\label{covdist}
Both the \emph{pdcov} and partial Mantel (\emph{Mantel}) tests are based on distances. One may ask ``is distance
covariance different or more general than covariance of distances?'' The answer is yes; it can be shown that $$
 \dCov^2(X, Y) = \Cov(|X-X'|, |Y-Y'|) - 2 \Cov(|X-X'|, |Y-Y''|),
$$ where $(X,Y), (X',Y'),$ and $(X'',Y'')$ are iid. The \emph{dcov} tests are tests of independence of $X$ and
$Y$ ($\dCov^2(X, Y)=0$), while the Mantel test is a test of the hypothesis $\Cov(|X-X'|, |Y-Y'|)=0$. An example
of dependent data such that their distances are uncorrelated but $\dCov^2(X, Y) > 0$ is given e.g. by Lyons \cite{lyons2013}. Thus, distance covariance tests are more general than Mantel tests, in the sense that distance
covariance measures all types of departures from independence.
\end{remark}

\subsection{Simulation design}
In each permutation test $R=999$ replicates are generated and the estimated $p$-value is computed as $$
 \widehat p = \frac{1 + \sum_{k=1}^R I(T^{(k)} \geq T_0)}{1 + R},
$$ where $I(\cdot)$ is the indicator function, $T_0$ is the observed value of the test statistic, and $T^{(k)}$
is the statistic for the $k$-th sample. The test is rejected at significance level $\alpha$ if $\widehat p \leq
\alpha$. The partial correlation test (\emph{pcor}) is also included for comparison. It is implemented as a
t-test \cite{ppcor}.

Type 1 error rates and estimated power are determined by a simulation size of 10,000 tests in each case; for
$n=10$ the number of tests is 100,000. The standard error is at most 0.005 (0.0016 for $n=10$).

\begin{Example}\label{ex1a}
This example is a comparison of type 1 error on uncorrelated standard normal data. The vectors $X$, $Y$, and $Z$
are each iid standard normal. Results summarized in Table \ref{T:ex1a} show that type 1 error rates for
\emph{pdcov}, \emph{dcov}, and \emph{partial Mantel} tests are within two se of the nominal significance level,
while type 1 error rates for pcor are inflated for $n=10, 20$.
\end{Example}

\begin{table}[ht]
\caption{Example \ref{ex1a}: Type 1 error rates at nominal significance level
$\alpha$ for uncorrelated standard trivariate normal data.\label{T:ex1a}}
\centering
\begin{tabular}{|r|r|rrrr|r|rrrr|}
  \hline
$n$ & $\alpha$ &  pdcov & dcov & Mantel & pcor & $\alpha$&  pdcov & dcov & Mantel & pcor \\
  \hline
  10 & 0.05 &     0.051 & 0.051 & 0.050 & 0.090 &  0.10 &   0.101 & 0.102 & 0.100 & 0.143 \\
  20 & 0.05 &     0.048 & 0.048 & 0.051 & 0.064 &  0.10 &   0.098 & 0.100 & 0.100 & 0.120 \\
  30 & 0.05 &     0.052 & 0.051 & 0.047 & 0.059 &  0.10 &   0.095 & 0.097 & 0.098 & 0.107 \\
  50 & 0.05 &     0.051 & 0.051 & 0.051 & 0.058 &  0.10 &   0.104 & 0.104 & 0.106 & 0.105 \\
 100 & 0.05 &     0.049 & 0.050 & 0.048 & 0.057 &  0.10 &   0.100 & 0.100 & 0.096 & 0.105 \\
   \hline
\end{tabular}
\end{table}

\begin{Example}\label{ex1b}
This example is the same as Example \ref{ex1a} except that the $X$ variable is standard lognormal rather than
standard normal. Results summarized in Table \ref{T:ex1b} demonstrate that the type 1 error rates are controlled
at their nominal significance level for \emph{pdcov}, \emph{dcov}, and \emph{partial Mantel} tests (all
implemented as permutation tests), while the \emph{pcor} $t$ test has inflated type 1 error rates for $n \leq
30$.
\end{Example}

\begin{table}[ht]
\caption{Example \ref{ex1b}: Type 1 error rates at nominal significance level
$\alpha$ for uncorrelated non-normal data. The partial distance correlation and other statistics are computed for
standard lognormal $X$ and standard normal $Y$, removing standard normal $Z$.\label{T:ex1b}}
\centering
\begin{tabular}{|r|r|rrrr|r|rrrr|}
  \hline
$n$ & $\alpha$ &  pdcov & dcov & Mantel & pcor & $\alpha$&  pdcov & dcov & Mantel & pcor \\
\hline
  10  & 0.05  &0.052 &0.053 &0.050 &0.090 & 0.10 & 0.103 &0.103 &0.099 &0.142  \\
  20  & 0.05  &0.049 &0.050 &0.054 &0.071 & 0.10 & 0.101 &0.102 &0.102 &0.122  \\
  30  & 0.05  &0.049 &0.049 &0.047 &0.062 & 0.10 & 0.104 &0.104 &0.099 &0.112  \\
  50  & 0.05  &0.050 &0.051 &0.047 &0.058 & 0.10 & 0.104 &0.104 &0.097 &0.108  \\
 100  & 0.05  &0.051 &0.052 &0.048 &0.054 & 0.10 & 0.101 &0.101 &0.097 &0.106  \\
 \hline
\end{tabular}
\end{table}

%
%\begin{figure}[ht]
% \begin{center}
% \subfigure[\label{F:ex1a}]{\epsfig{file=type1-normal, width=.49\linewidth}}
% \subfigure[\label{F:ex1b}]{\epsfig{file=type1-logn, width=.49\linewidth}}
% \end{center}
% \caption{
% Type 1 error rates for partial distance covariance, inner product (dcov) test,
% partial Mantel test, and partial correlation test,
% at significance level $\alpha=0.10$.
% Figure (a) summarizes Example \ref{ex1a} (uncorrelated standard normal data).
% Figure (b) summarizes Example \ref{ex1b} (uncorrelated non-normal data).}
%\end{figure}

\begin{Example}\label{ex4}
In this example, power of tests is compared for correlated trivariate normal data with standard normal marginal
distributions. It is a modification of Example \ref{ex1a} such that the variables $X$, $Y$, and $Z$ are each
\emph{correlated} standard normal. The pairwise correlations are $\rho(X,Y)=\rho(X,Z)=\rho(Y,Z)=0.5$. The power
comparison summarized in Figure
\ref{F:ex4} shows that \emph{pdcor} has higher power than \emph{pcor}
or \emph{partial Mantel} tests. The simulation parameters are identical to those described for type 1 error
simulations.
\end{Example}

\begin{Example}\label{ex5}
This example presents a power comparison for correlated non-normal data. It is a modification of Example
\ref{ex1b} such that the variables $X$, $Y$, and $Z$ are each \emph{correlated}, $X$ is standard lognormal, while
$Y$ and $Z$ are each standard normal. The pairwise correlations are $\rho(\log X,Y)=\rho(\log
X,Z)=\rho(Y,Z)=0.5$. The power comparison summarized in Figure
\ref{F:ex5} shows that \emph{pdcor} has higher power than \emph{pcor}
or \emph{partial Mantel} tests. The simulation parameters are identical to those described for type 1 error
simulations. Again the \emph{pdcov} test has superior power performance. The relative performance of the
\emph{pcor} and \emph{partial Mantel} tests are reversed (partial Mantel test with lowest power) in this example
compared with Example \ref{ex4}.
\end{Example}

\begin{figure}[ht]
 \begin{center}
 \subfigure[\label{F:ex4}]{\epsfig{file=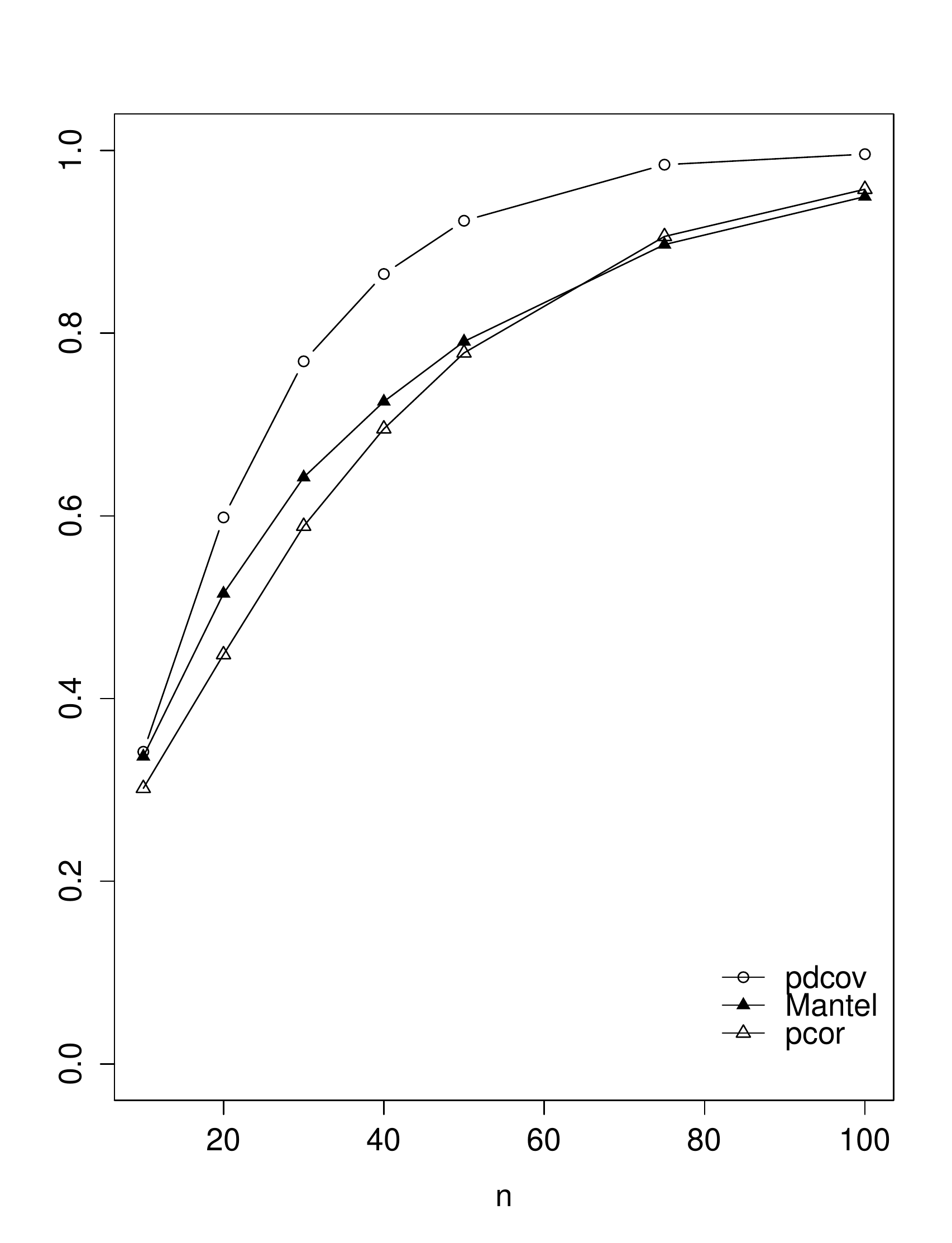, width=.49\linewidth, height=2in}}
 \subfigure[\label{F:ex5}]{\epsfig{file=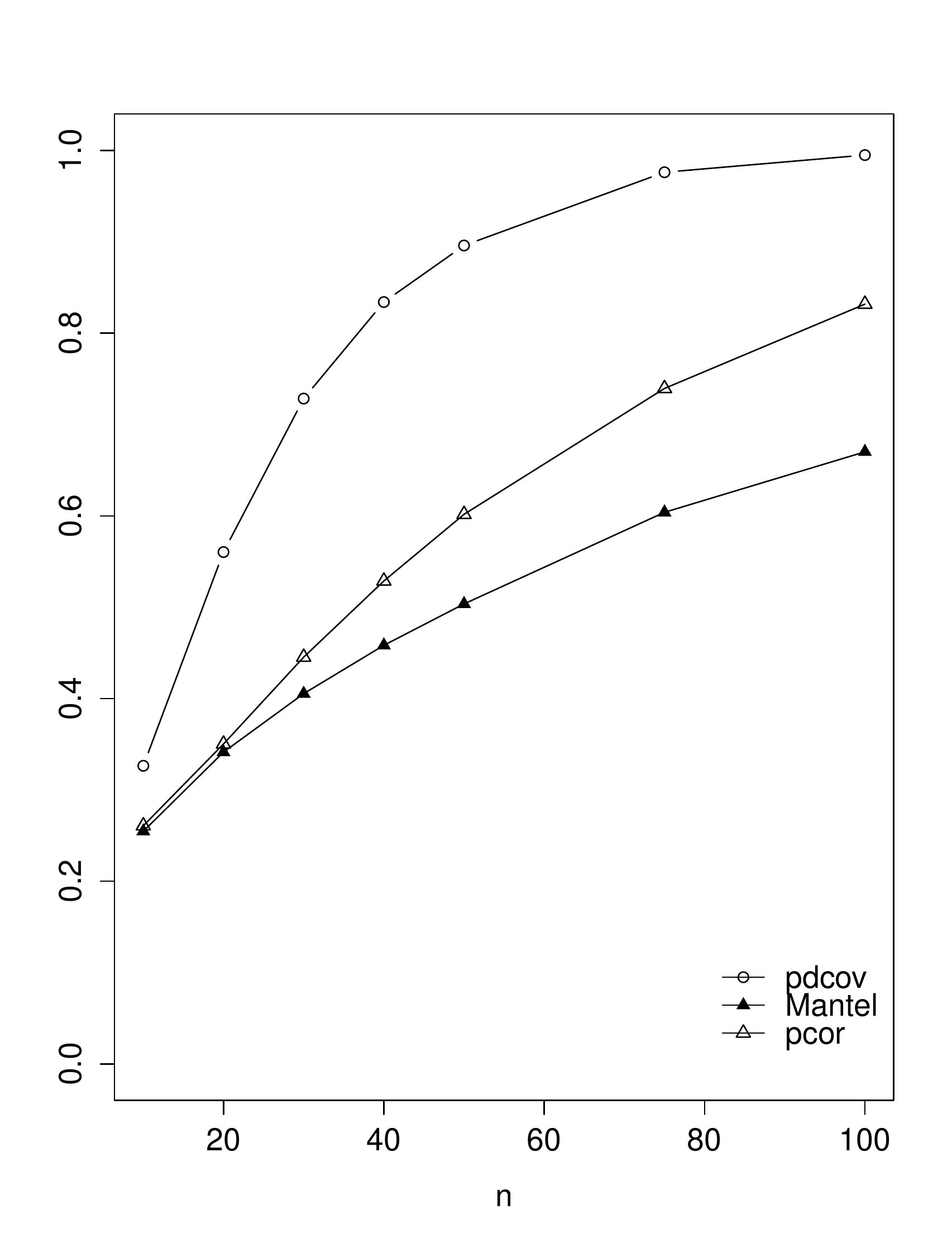, width=.49\linewidth, height=2in}}
 \end{center}
 \caption{
 Power comparisons for partial distance covariance,
 partial Mantel test, and partial correlation test at significance level $\alpha=0.10$.
 Figure (a) summarizes Example \ref{ex4} (correlated standard normal data).
 Figure (b) summarizes Example \ref{ex5} (correlated non-normal data).}
\end{figure}

\begin{Example}\label{ex6}
In this example, we compute distance correlation for dissimilarities and apply the distance covariance inner
product test. The data are two sets of dissimilarities for seven maize populations from Piepho \cite{piepho},
originally from Reif et al.\ \cite{reif}. The two dissimilarity arrays are given in Table \ref{T:maize7}, with genetic distance by simple sequence repeats,
modified Roger's distances below the diagonal and mid-parent heterosis for days to silking of crosses above the
diagonal. Note: comparing with Reif et al. \cite[Tables~2, 4]{reif}, there is an error  in row 2, column 5 of
Piepho \cite[Table 2]{piepho}, where the coefficient for (Pop21, Pop29) should be 0.4, not -0.4.

\begin{table}[h]
\caption{Dissimilarities for Example \ref{ex6}:
Two genetic distances for seven maize populations from Reif et al. \cite{reif}.
\label{T:maize7}}
\centering
\begin{tabular}{rrrrrrrr}
  \hline
 & Pool24 & Pop21 & Pop22 & Pop25 & Pop29 & Pop32 & Pop43 \\
  \hline
  Pool24 &  & 0.50 & -0.40 & 0.70 & -0.30 & -0.70 & -1.30 \\
  Pop21  & 0.22 &  & -0.40 & -0.40 & 0.40 & -0.70 & -1.20 \\
  Pop22 & 0.20 & 0.22 &  & -0.60 & -1.50 & -1.20 & -1.80 \\
  Pop25 & 0.22 & 0.27 & 0.25 &  & -0.90 & -0.90 & -0.50 \\
  Pop29 & 0.22 & 0.24 & 0.23 & 0.26 &  & -0.70 & -0.20 \\
  Pop32 & 0.27 & 0.30 & 0.28 & 0.26 & 0.28 &  & -0.90 \\
  Pop43 & 0.25 & 0.29 & 0.27 & 0.28 & 0.27 & 0.32 &  \\
   \hline
\end{tabular}
\end{table}

Piepho \cite{piepho} compared two tests for association between the matrices; the two-sided $t$-test based on
Pearson's correlation of distances, and the permutation Mantel test with 100,000 permutations. Piepho found that
the tests lead to different decisions. The Pearson correlation $r=-0.44$ had $p$-value 0.0466 for a two-sided
$t$-test, which is significant at the 5\% level. The permutation Mantel test gave a non-significant $p$-value of
0.1446. The different conclusions may be due to fact that the $t$ test applied for the Mantel statistic is on the
liberal side.

We repeated the analysis and also use the data to illustrate the application of dCor and the inner product test
to dissimilarities. Using the
\texttt{cor.test} function in R we compute sample correlation
$r = -0.4388$ and a two-sided $p$-value of .04662. For the permutation Mantel test, we used the
\emph{ecodist} implementation
of Mantel test. The three reported $p$-values are for the alternative hypotheses $\rho \leq 0$, $\rho \geq 0$,
and $\rho \neq 0$, respectively, so for a two tailed Mantel test we have $p$-value 0.14370.
\begin{verbatim}
 ecodist::mantel(dy ~ dx, nperm = 1e+05, nboot = 0)
 ##  mantelr    pval1    pval2    pval3     llim     ulim
 ## -0.43875  0.92562  0.07486  0.14370  0.00000  0.00000
\end{verbatim}

The distance covariance inner product test can be applied to dissimilarities following the method that we
developed for partial distance covariance. Starting with the two dissimilarities which we stored in matrices
\texttt{d1} and \texttt{d2}, we $\mathcal U$-center them, obtain the two representations in Euclidean space, and apply the
inner product test. If we need only the distance correlation statistic $R^*_{xy}$, this can be computed by the
\texttt{Rxy} function in \emph{pdcor} package, which implements formula (\ref{Rcorrected}).
\begin{verbatim}
 AU <- Ucenter.mat(d1)
 BU <- Ucenter.mat(d2)
 u <- cmdscale(as.dist(AU), add = TRUE, k = 5)$points
 v <- cmdscale(as.dist(BU), add = TRUE, k = 5)$points

 Rxy(u, v)
 [1] -0.2734559
\end{verbatim}
The function \texttt{dcovIP.test} in the \emph{pdcor} package \cite{pdcor} for R implements the inner product
dCov test. Alternately one could apply the original dCov test to the points in the Euclidean space.
\begin{verbatim}
 dcovIP.test(u, v, R = 9999)
 ##
 ## dCov inner product test of independence
 ##
 ## data:  replicates 9999
 ## n * V^* = -0.0124, p-value = 0.832
 ## sample estimates:
 ##     R^*
 ## -0.2735
\end{verbatim}
The dCov inner product test is not significant, with $p$-value 0.832.

This example illustrates that the distance correlation or distance covariance in the Hilbert space $\mathscr H_n$
is readily applied to measuring or testing dependence between samples represented by dissimilarities. It is more
general than a test of association such as the Mantel test, as we have shown in Remark \ref{covdist}.
\end{Example}

\begin{Example}[Variable selection]\label{ex7}
This example considers the prostate cancer data from a study by Stamey et al. \cite{prostatepaper}, and is discussed in Hastie, Tibshirani, and Friedman \cite[Ch.~3]{htf2009} in the context of variable selection. The data is from men who were about to have a radical prostatectomy. The response variable \emph{lpsa} measures log PSA (log of the level of prostate-specific antigen). The predictor variables under consideration are eight clinical measures:

\begin{tabular}{ll}
  % after \\: \hline or \cline{col1-col2} \cline{col3-col4} ...
lcavol& log cancer volume \\
lweight& log prostate weight\\
age& age                      \\
lbph& log of the amount of benign prostatic hyperplasia\\
svi& seminal vesicle invasion            \\
lcp& log of capsular penetration\\
gleason& Gleason score            \\
pgg45& percent of Gleason scores 4 or 5\\
\end{tabular}

Here the goal was to fit a linear model to predict
the response \emph{lpsa} given
one or more of the predictor variables above.
The train/test set indicator is in the last column of the data set.
For comparison with the discussion and analysis in Hastie et al. \cite{htf2009}, we
standardized each variable, and used the training set of 67 observations
for variable selection.

Feature screening by distance correlation has been investigated by
Li et al. \cite{lzz2012}. In this example we introduce a partial distance
correlation criterion for variable selection. For simplicity, we implement
a simple variant of forward selection.
This criterion, when applied for a linear model, can help to identify
possible important variables that have strong non-linear association with
the response, and thus help researchers to improve a linear model by
transforming variables or improve prediction by extending to a
nonlinear model.

In the initial step of pdCor forward selection,
the first variable to enter the model is the variable
$x_j$ for which distance correlation $R_{x_j,y}$ with response $y$ is largest.
After the initial step, we have a model with one predictor $x_j$, and we compute
pdCor$(y, x_k; x_j)$, for the variables $x_k \neq x_j$ not in the model, then
select the variable $x_k$ for which pdCor$(y, x_k; x_j)$ is largest.
Then continue, at each step computing $\pdCor(y, x_j;w)$ for every
variable $x_j$ not yet in the model, where $w$ is the vector of
predictors currently in the model. The variable to enter next is the one
that maximizes $\pdCor(y, x_j; w)$.

According to the pdCor criterion, the variables selected enter the model
in the order: \emph{  lcavol, lweight, svi, gleason, lbph, pgg45}.

If we set a stopping rule at 5\% significance for the
pdCor coefficient, then we stop after adding \emph{lbph}
(or possibly after adding \emph{pgg45}).
At the step where \emph{gleason} and \emph{lbph} enter,
the $p$-values for significance of pdCor are 0.016 and 0.001, respectively. The
$p$-value for \emph{pgg45} is approximately 0.05.

The models selected by pdCor, best subset method, and lasso (Hastie et al.
\cite[Table 3.3, p. 63]{htf2009}) are:
$$
%\left\{
  \begin{array}{ll}
    \textrm{pdCor:} & \hbox{lpsa $\sim$ lcavol + lweight + svi + gleason;} \\
    \textrm{best subsets:} & \hbox{lpsa $\sim$ lcavol + lweight ;} \\
    \textrm{lasso:} & \hbox{lpsa $\sim$ lcavol + lweight + svi + lbph.}
  \end{array}
%\right.
$$
The order of selection for ordinary forward stepwise selection (Cp) is
\emph{ lcavol, lweight, svi, lbph, pgg45, age, lcp, gleason}.

Comparing pdCor forward selection with lasso and forward stepwise selection,
we see that the results are similar, but \emph{gleason} is not in the
lasso model and enters last in the forward stepwise selection, while it
is the fourth variable to enter the pdCor selected model.  The raw
Gleason Score is an integer from 2 to 10 which
is used to measure how aggressive is the tumor, based on a prostate biopsy.
Plotting the
data (see Figure \ref{F:prostate1}) we can observe that
there is a strong \emph{non-linear} relation between \emph{gleason} and
\emph{lpsa}.

This example illustrates that partial distance correlation has practical
use in variable selection and in model checking. If we were using
pdCor only to check the model selected by another procedure, it would
show in this case that there is some nonlinear dependence remaining
between the response and the predictors excluded from the lasso model
or the best subsets model. Using the
pdCor selection approach, we also learn which of the remaining
predictors may be important.

\begin{figure}[ht]
 \begin{center}
 \epsfig{file=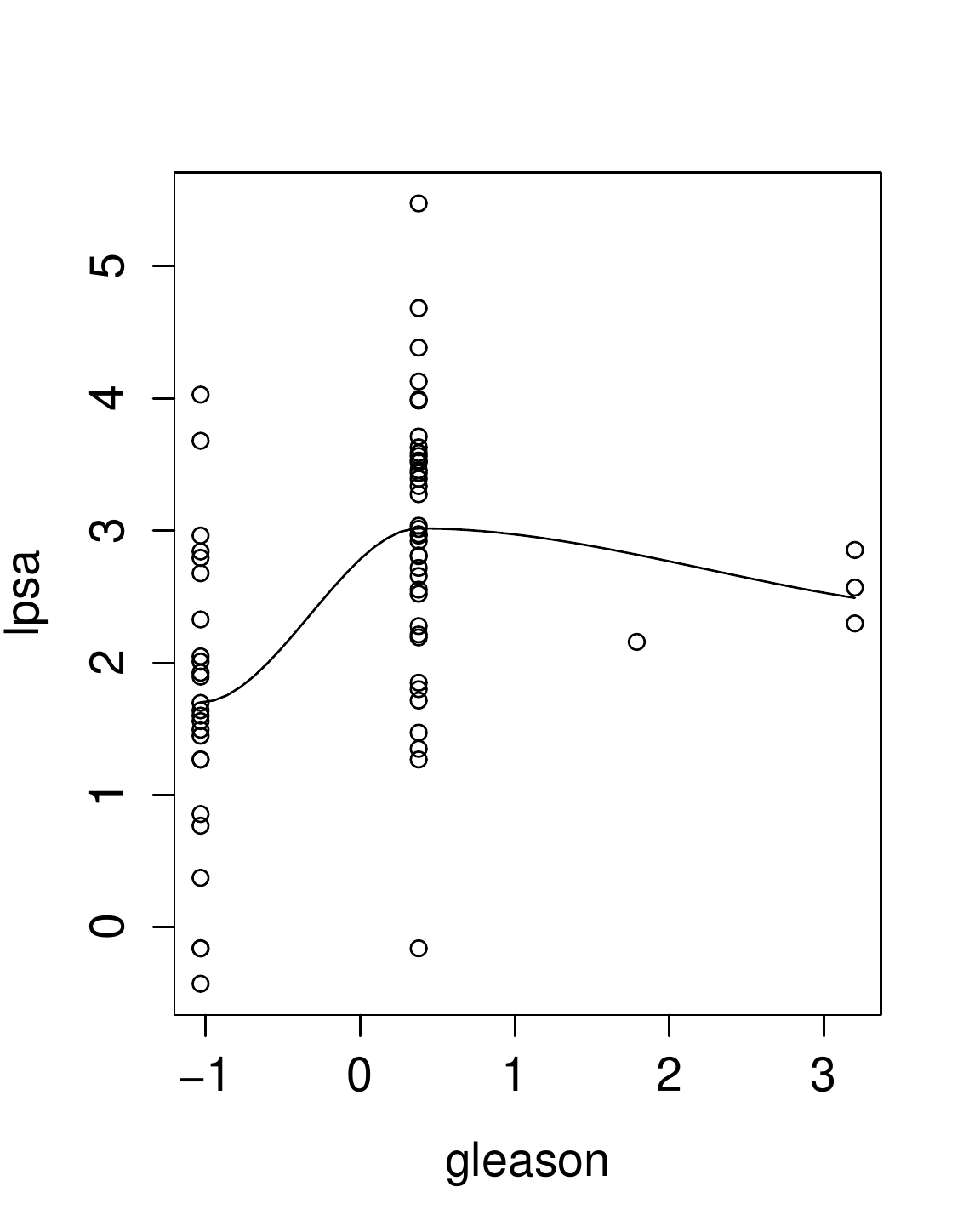, width=.5\linewidth,height=2.25in}
 \end{center}
 \caption{Scatter plot of response lpsa vs Gleason score with a loess smoother.
\label{F:prostate1}}
\end{figure}

Finally, it is useful to recall that pdCor has more flexibility to
handle predictors that are multi-dimensional. One may want groups
of variables to enter or leave the model as a set.
It is often the case that when dimension of the feature space is high,
many of the predictor variables are highly correlated. In this case,
methods such as Partial Least Squares are sometimes applied,
where a small set of derived predictors
(linear combinations of features) become the predictor variables.
Using methods of partial distance correlation, one could evaluate the
subsets (as predictor sets) without reducing the multivariate observation to
a real number via linear combination. The pdCor coefficient can be computed
for multivariate predictors (and for multivariate response).
\end{Example}

\section{Summary}

Partial distance covariance and partial distance correlation coefficients are measures of dependence of two
random vectors $X$, $Y$, controlling for a third random vector $Z$, where $X$, $Y$, and $Z$ are in arbitrary, not
necessarily equal dimensions. These definitions are based on an unbiased distance covariance statistic, replacing
the original double centered distance matrices of the original formula with $\mathcal U$-centered distance matrices. The
$\mathcal U$-centered distance covariance is the inner product in the Hilbert space $\mathscr H_n$ of $\mathcal U$-centered
distance matrices for samples of size $n$, and it is unbiased for the squared population distance covariance.

Each element in the Hilbert $\mathscr H_n$ is shown to be the $\mathcal U$-centered distance matrix of some configuration
of $n$ points in a Euclidean space $\mathbb R^p$, $p \leq n-2$. The proof and the solution are obtained through
application of theory of classical MDS. This allows one to define dCor and dCov for dissimilarity matrices, and
provides a statistically consistent test of independence based on the inner product statistic. This methodology
also provides a similar test of the hypothesis of zero partial distance correlation based on the inner product.

All pdCor and pdCov methods have been implemented and simulation studies
illustrate that the tests control the
type 1 error rate at its nominal level.
Power performance was superior compared with power of partial
correlation and partial Mantel tests for association.
Methods have been implemented and illustrated for non-Euclidean
dissimilarity matrices.
A `pdCor forward selection' method was applied to select variables for
a linear model, with a significance test as the stopping rule.
More sophisticated selection methods will be investigated in future work.
With the flexibility to handle multivariate response and/or multivariate
predictor, pdCor offers a new method to extend the variable selection toolbox.
Software is available in the \emph{energy} \cite{energy}
package for R, and the \emph{pdcor} \cite{pdcor} package for R.

\appendix

\section{Proofs of statements}

\subsection{Proof of Proposition \ref{P:unbiased}\label{prfUnbiased}}

Proposition \ref{P:unbiased} asserts that $(\widetilde A \cdot \widetilde B)$ is an unbiased estimator of the
population coefficient $\mathcal V^2(X, Y)$. When the terms of $(\widetilde A \cdot \widetilde B)$ are expanded,
we have a linear combination of terms $a_{ij}b_{kl}$. The expected values of these terms differ according to the
number of subscripts that agree. Define
\begin{align*}
 \alpha &:= E[a_{kl}]=E[|X-X'|], \qquad \beta := E[b_{kl}]=E[|Y-Y'|], \qquad k \neq l,\\
 \delta &:= E[a_{kl}b_{kj}] = E[|X-X'||Y-Y''|], \qquad j, k, l \; \mathrm{ distinct},\\
 \gamma &:= E[a_{kl}b_{kl}]=E[|X-X'||Y-Y'|], \qquad k \neq l,
\end{align*}
where $(X,Y), (X',Y'),$ $(X'',Y'')$ are iid. Due to symmetry, the expected value of each term in the expanded
expression $(\widetilde A \cdot \widetilde B)$ is proportional to one of $\alpha \beta$, $\delta$, or $\gamma$,
so the expected value of $(\widetilde A \cdot \widetilde B)$ can be written as a linear combination of $\alpha
\beta$, $\delta$, and $\gamma$.

The population coefficient can be written as (see \cite[Theorem 7]{sr09a})
\begin{align*}
 \mathcal V^2(X, Y) &=
  E[|X-X'||Y\!-\!Y'|] + E[|X-X'|]\,E[|Y\!-\!Y'|] - 2E[|X-X'||Y\!-\!Y''|]
\\ &= \gamma + \alpha \beta - 2 \delta.
\end{align*}

Let us adopt the notation $\widetilde a_{k.}:= \frac{a_{k.}}{n-2}$, $\widetilde a_{.l}:=\frac{a_{.l}}{n-2}$, and
$\widetilde a_{..}:= \frac{a_{..}}{(n-1)(n-2)}$, where $a_{k.}=\sum_{l=1}^n a_{kl}$, $a_{.l}=\sum_{k=1}^n
a_{kl}$, and $a_{..}=\sum_{k,l=1}^n a_{kl}$.
 Similarly define $\widetilde b_{k.}, \widetilde b_{.l}$, and $\widetilde b_{..}$.
 Then
\begin{align*}
n(n-3) & (\widetilde A \cdot \widetilde B) =
\sum\limits_{k \neq l} \left\{
  \begin{array}{rrrr}
  \phantom{+}  a_{kl}b_{kl} & -a_{kl} \widetilde b_{k.} & -a_{kl} \widetilde b_{.l} & + a_{kl} \widetilde b_{..}
  \\
  - \widetilde a_{k.}b_{kl} & +  \widetilde a_{k.} \widetilde b_{k.} &  + \widetilde a_{k.} \widetilde b_{.l} & -
  \widetilde a_{k.} \widetilde b_{..} \\
  - \widetilde a_{.l}b_{kl} & +  \widetilde a_{.l} \widetilde b_{k.} & + \widetilde a_{.l} \widetilde b_{.l} & -
  \widetilde a_{.l} \widetilde b_{..} \\
  + \widetilde a_{..}b_{kl} &  - \widetilde a_{..} \widetilde b_{k.} & - \widetilde a_{..} \widetilde b_{.l} & +
  \widetilde a_{..} \widetilde b_{..}  \\
  \end{array}
  \right\}
\end{align*}
\begin{align*}
  \begin{array}{lllll}
=  \sum\limits_{k\neq l} a_{kl}b_{kl}  & -\sum\limits_{k}a_{k.} \widetilde b_{k.} & -\sum\limits_{l} a_{.\,l}
\widetilde b_{.\,l} & +  a_{..} \widetilde b_{..} \\
  -\sum\limits_{k} \widetilde a_{k.}b_{k.} & +(n-1) \sum\limits_{k}   \widetilde a_{k.} \widetilde b_{k.} &
  +\sum\limits_{k \neq l}   \widetilde a_{k.} \widetilde b_{.\,l}       & -(n-1)\sum\limits_{k}  \widetilde
  a_{k.} \widetilde b_{..} \\
  -\sum\limits_{l}  \widetilde a_{.\,l}b_{.\,l} & +\sum\limits_{k \neq l}   \widetilde a_{.\,l} \widetilde b_{k.}
  & +(n-1)\sum\limits_{l}  \widetilde a_{.\,l} \widetilde b_{.\,l} & -(n-1)\sum\limits_{l}  \widetilde a_{.\,l}
  \widetilde b_{..} \\
  + \widetilde a_{..} b_{..}  &  -(n-1)\sum\limits_{k}  \widetilde a_{..} \widetilde b_{k.} &
  -(n-1)\sum\limits_{l}  \widetilde a_{..} \widetilde b_{.\,l} & +n(n-1) \widetilde a_{..} \widetilde b_{..}.
  \,\\
  \end{array}
\end{align*}

Let $$T_1=\sum_{k \neq l} a_{kl}b_{kl}, \qquad T_2=a_{..}b_{..}, \qquad T_3=\sum_k a_{k.}b_{k.}. $$ Then
\begin{align*}
n(n-3)(\widetilde A \cdot \widetilde B) &=
\left\{
  \begin{array}{lllll}
  T_1 & -\frac{T_3}{n-2}  & - \frac{T_3}{n-2} & + \frac{T_2}{(n-1)(n-2)}\\
  - \frac{T_3}{n-2} &+ \frac{(n-1)T_3}{(n-2)^2} &+\frac{T_2-T_3}{(n-2)^2} &- \frac{T_2}{(n-2)^2} \\
  - \frac{T_3}{n-2} &+ \frac{T_2-T_3}{(n-2)^2} &+ \frac{(n-1)T_3}{(n-2)^2} &- \frac{T_2}{(n-2)^2} \\
  + \frac{T_2}{(n-1)(n-2)} &- \frac{T_2}{(n-2)^2} &- \frac{T_2}{(n-2)^2} &+ \frac{nT_2}{(n-1)(n-2)^2}\\
  \end{array}
 \right\}
 \\ &= T_1 - \frac{T_2}{(n-1)(n-2)^2} -\frac{2T_3}{n-2}.
\end{align*}

It is easy to see that $E[T_1]=n(n-1)\gamma$. By expanding the terms of $T_2$ and $T_3$, and combining terms that
have equal expected values, one can obtain $$
 E[T_2] = n(n-2)\{(n-2)(n-3)\alpha \beta + 2 \gamma + 4(n-2)\delta\},
$$ and $$
 E[T_3] = n(n-1)\{(n-2)\delta + \gamma\}.
$$ Then
\begin{align*}
E[(\widetilde A \cdot \widetilde B)] &=
\frac{1}{n(n-3)} E\left[ T_1 - \frac{T_2}{(n-1)(n-2)^2} -\frac{2T_3}{n-2}\right] \\
&= \frac{1}{n(n-3)} \left\{ \frac{n^3-5n^2+6n}{n-2} \gamma + n(n-3)\alpha \beta + (6n-2n^2)\delta \right \} \\ &=
\gamma + \alpha \beta - 2 \delta = \mathcal V^2(X, Y).
\end{align*}
\hfill \qed

\subsection{Proof of Proposition \ref{P:pdcor2}\label{prfRxy}}

Here the inner product is (\ref{VU}) and the `vectors' are $\mathcal U$-centered elements of the Hilbert space $\mathscr
H_n$. Equation (\ref{R.alt}) can be derived from (\ref{pdcor2}) using similar algebra with inner products as used
to obtain the representation
\begin{equation*}
 \frac{\la x_{z^{\perp}} , y_{z^{\perp}} \ra}
{\sqrt{ \la x_{z^{\perp}} ,x_{z^{\perp}} \ra
\la y_{z^{\perp}} ,y_{z^{\perp}} \ra }}
= \frac{r_{xy} - r_{xz}r_{yz}} {\sqrt{1-r_{xz}^2}\sqrt{1-r_{yz}^2}}.
\end{equation*}
for the linear correlation $r$ (see e.g.\ Huber \cite{huber81}). The details for $\pdCor$ are as follows. If
either $|\widetilde A|$ or $|\widetilde B|=0$ then $(\widetilde A,\widetilde B)=0$ so by definition
$R^*(x,y;z)=0$, $R^*_{xz}R^*_{yz}=0$, and (\ref{R.alt}) is also zero.

If $|\widetilde A||\widetilde B|\neq 0$ but $|\widetilde C|=0$, then $R^*_{x,z}=R^*_{y,z}=0$. In this case
$P_{z^\perp}(x)=\widetilde A$ and $P_{z^\perp}(y)=\widetilde B$, so that
\begin{align*}
R^*(x, y; z)&=
  \frac{(P_{z^\perp}(x) \cdot P_{z^\perp}(y))}{|P_{z^\perp}(x)||P_{z^\perp}(y)|}
 = \frac{(\widetilde A \cdot \widetilde B)}{|\widetilde A||\widetilde B|} = R^*_{xy},
\end{align*}
which equals expression (\ref{R.alt}) since $R^*_{x,z}=R^*_{y,z}=0$.

Suppose that none of $|\widetilde A|,|\widetilde B|,|\widetilde C|$ are zero. Then
\begin{align*}
R^*(x, y; z)&=
  (P_{z^\perp}(x) \cdot P_{z^\perp}(y))
 \\ &=
 \left(\left\{ \widetilde A-\frac{(\widetilde A \cdot \widetilde C)}{(\widetilde C \cdot \widetilde C)} \ \cdot
 \widetilde C\right\} \cdot  \left\{ \widetilde B - \frac{(\widetilde B \cdot \widetilde C)}{(\widetilde C \cdot
 \widetilde C)} \ \cdot  \widetilde C\right\} \right)
\\ &= (\widetilde A \cdot  \widetilde B) -  \frac{2(\widetilde A \cdot \widetilde C)(\widetilde B \cdot
\widetilde C)}{(\widetilde C \cdot \widetilde C)} + \frac{(\widetilde A \cdot \widetilde C)(\widetilde B \cdot
\widetilde C)(\widetilde C \cdot \widetilde C)}{(\widetilde C \cdot \widetilde C)^2}
\\ &= |\widetilde A||\widetilde B| \left\{ \frac{(\widetilde A \cdot \widetilde B)}{|\widetilde A||\widetilde B|}
- \frac{(\widetilde A \cdot \widetilde C)(\widetilde B \cdot \widetilde C)}{|\widetilde A||\widetilde
C||\widetilde B||\widetilde C|} \right\}
\\ &= |\widetilde A||\widetilde B| \left\{ R^*_{xy} - R^*_{xz}R^*_{yz} \right\}.
\end{align*}
Similarly, in the denominator of (\ref{pdcor2}) we have
\begin{align*}
& \sqrt{|\widetilde A|^2 \left( 1 - (R^*_{xz})^2 \right)
 |\widetilde B|^2 \left( 1 - (R^*_{yz})^2 \right) }
\\ &= |\widetilde A||\widetilde B| \sqrt{(1-(R^*_{xz})^2)(1-(R^*_{yz})^2)}.
\end{align*}
Thus, if the denominator of (\ref{pdcor2}) is not zero, the factor $|\widetilde A||\widetilde B|$ cancels from
the numerator and denominator and we obtain (\ref{R.alt}).
\hfill \qed

\subsection{Proof of Lemma \ref{lemmaU}\label{prfLemmaU}}
The sum of the first row of $\widetilde A$ is
\begin{align*}
 \widetilde A_{1.} &= \sum_{j=2}^n \left(a_{1j} - \frac{a_{1.}}{n-2} - \frac{a_{.j}}{n-2} +
\frac{a_{..}}{(n-1)(n-2)} \right)
\\&= a_{1.} - \frac{(n-1)a_{1.}}{n-2} - \frac{a_{..}-a_{.1}}{n-2} + \frac{(n-1)a_{..}}{(n-1)(n-2)}
=0.
\end{align*}
Similarly each of the rows of $\widetilde A$ sum to zero, and by symmetry the columns also sum to zero, which
proves statement (i).

Statements (ii) and (iii) follow immediately from (i). In (iv) let $b_{ij}$ be the $ij$-th element of $B$. Then
$b_{ii}=0$, and for $i \neq j$, $b_{ij}=\widetilde A_{ij} + c$. Hence $b_{i.}=b_{.j}=(n-1)c$, $i \neq j$ and
$b_{..}=n(n-1)c$. Therefore
\begin{align*}
 \widetilde B_{ij} = b_{ij}-\frac{2(n-1)c}{n-2} + \frac{n c}{n-2}
= \widetilde A_{ij} + c + \frac{(2-n)c}{n-2} = \widetilde A_{ij}, \qquad i \neq j,
\end{align*}
which proves (iv). \hfill \qed

\subsection{Proof of Lemma \ref{lemma3}\label{prfLemma3}}

It is clear that $c_1   A_X(x,x')$ is identical to
the double-centered dissimilarity $c_1 a(x,x')$.
It remains to show that the sum of two arbitrary elements
$  A_X(x,x') +   B_Y(y,y')$ is a double-centered
dissimilarity function. Let $T=[X,Y] \in \mathbb R^p \times \mathbb R^q$. Consider
the dissimilarity function $d(t,t'):= a(x,x')+  b(y,y')$,
where $t=[x,y]$ and $t'=[x',y']$.
Then
\begin{align*}
  D_X(t,t')&=
 a(x,x') + b(y,y')
 - \int [a(x,x')+b(y,y')]dF_T(t')
\\& \qquad - \int [a(x,x')+b(y,y')]dF_T(t)
\\& \qquad
+ \iint [a(x,x')+b(y,y')]dF_T(t')dF_T(t)
\\ &=
a(x,x') + b(y,y')
 -E[a(x,X')+b(y,Y')]
 \\&
\qquad - E[a(X,x')+b(Y,y')]
+ E[a(X,X')+b(Y,Y')]
\\&=   A_X(x,x') +   B_Y(y,y').
\end{align*}

\hfill \qed

\subsection{Proof of Theorem \ref{T:pdCor.pop}\label{prfThm3}}

\begin{proof} It is straightforward to check the first three special cases.

Case (i): If $Z$ is constant a.s. then $P_{Z^\perp}(X)=A_X$, and $P_{Z^\perp}(Y)=B_Y$.
In this case both $\mathcal R^*(X,Y;Z)$ and (\ref{pdCor.pop}) simplify to
$\mathcal R^2(X,Y)$.

Case (ii): If $X$ or $Y$ is constant a.s. and $Z$ is not a.s. constant,
then we have zero in both Definition \ref{defPDCOR} and (\ref{pdCor.pop}).

Case (iii): If none of the variables $X,Y,Z$ are a.s. constant, but
$|P_{Z^\perp}(X)|=0$ or $|P_{Z^\perp}(Y)|=0$, then $\mathcal R^*(X, Y;Z)=0$
by definition and $\mathcal R^2(X,Z)=1$ or
$\mathcal R^2(Y,Z)=1$. Thus (\ref{pdCor.pop}) is also zero by definition.

Case (iv): In this case none of the variables $X,Y,Z$ are a.s. constant,
and $|P_{Z^\perp}(X)||P_{Z^\perp}(Y)|>0$. Thus
\begin{align}\label{e:iv-1}
E[P_{Z^\perp}(X) & P_{Z^\perp}(Y)] = ( A_X-\alpha C_Z, B_Y-\beta C_Z )
 \\ \notag &= ( A_X,B_Y ) - \alpha ( B_Y,C_Z )
 - \beta ( A_X,C_Z ) + \alpha \beta ( C_Z, C_Z )
 \\ &= \notag
 \mathcal V^2(X, Y)
- \frac{\mathcal V^2(X,Z)\mathcal V^2(Y,Z)}{\mathcal V^2(Z,Z)}
\\& \notag \qquad
- \frac{\mathcal V^2(X,Z)\mathcal V^2(Y,Z)}{\mathcal V^2(Z,Z)}
 + \frac{\mathcal V^2(X,Z)\mathcal V^2(Y,Z) \mathcal V^2(Z,Z)}
 {\mathcal V^2(Z,Z) \mathcal V^2(Z,Z)}
  \\ \notag &=
 \mathcal V^2(X, Y)
- \frac{\mathcal V^2(X,Z)\mathcal V^2(Y,Z)}{\mathcal V^2(Z,Z)}.
\end{align}
Similarly
\begin{align} \label{e:iv-2}
|P_{Z^\perp}(X)|^2 &= E[P_{Z^\perp}(X) P_{Z^\perp}(X)] =
 \mathcal V^2(X, X)
- \frac{\mathcal V^2(X,Z)\mathcal V^2(X,Z)}{\mathcal V^2(Z,Z)}
\\ \notag &=  \mathcal V^2(X, X)
 - \alpha \mathcal V^2(X,Z)
 \\ \notag &= \mathcal V^2(X, X) \left( 1 - \frac{(\mathcal V^2(X,Z))^2}{\mathcal V^2(X, X) \mathcal V^2(Z, Z)} \right)
 \\ \notag &= \mathcal V^2(X, X) (1 - \mathcal R^4(X, Z)),
\end{align}
and
\begin{align} \label{e:iv-3}
|P_{Z^\perp}(Y)|^2 =
\mathcal V^2(Y, Y) (1 - \mathcal R^4(Y, Z)).
\end{align}
Hence, using (\ref{e:iv-1})--(\ref{e:iv-3})
\begin{align*}
\mathcal R^*(X,Y;Z)&= \frac{( P_{Z^\perp}(X) P_{Z^\perp}(Y) )}
{|P_{Z^\perp}(X)|\, |P_{Z^\perp}(Y)|}
\\ &=
\frac{ \mathcal V^2(X, Y)
- \frac{\mathcal V^2(X,Z)\mathcal V^2(Y,Z)}{\mathcal V^2(Z,Z)}}
{\sqrt{\mathcal V^2(X, X) (1 - \mathcal R^4(X, Z))} \sqrt{\mathcal V^2(Y, Y) (1 - \mathcal R^4(Y, Z))}}
\\&=
\frac{ \frac{\mathcal V^2(X, Y)}{\mathcal V(X,X) \mathcal V(Y,Y)}
- \frac{\mathcal V^2(X,Z)\mathcal V^2(Y,Z)}{\mathcal V(X,X) \mathcal V(Y,Y)\mathcal V^2(Z,Z)}}
{\sqrt{(1 - \mathcal R^4(X, Z))(1 - \mathcal R^4(Y, Z))}}
\\ &=
\frac{\mathcal R^2(X,Y) - \mathcal R^2(X,Z) \mathcal R^2(Y,Z)}
{\sqrt{(1 - \mathcal R^4(X,Z))(1 - \mathcal R^4(Y, Z))}}.
\end{align*}
Thus, in all cases Definition \ref{defPDCOR} and (\ref{pdCor.pop}) coincide.
\end{proof}

\section*{Acknowledgements}
The authors thank Russell Lyons for several helpful suggestions and
comments on a preliminary draft of this paper.

\end{document}